\newtheorem{theorem}{Theorem}
\newtheorem{corollary}[theorem]{Corollary}
\newtheorem{lemma}[theorem]{Lemma}
\newtheorem{experiment}[theorem]{Experiment}
\def\ord{\operatorname{ord}}
\def\height{\operatorname{ht}}
\let\set\mathbb
\def\lc{\operatorname{lc}}
\def\sgn{\operatorname{sgn}}
\def\<#1>{\langle#1\rangle}
\begin{document}

\allowdisplaybreaks

\title{Bounds for D-Finite Closure Properties}

\numberofauthors{1}

\author{%
 \mathstrut Manuel Kauers\titlenote{Supported by the Austrian Science Fund (FWF) grant Y464-N18.}\\[\smallskipamount]
  \affaddr{\mathstrut RISC / Joh. Kepler University}\\
  \affaddr{\mathstrut 4040 Linz, Austria}\\
  \affaddr{\mathstrut mkauers@risc.jku.at}
}

\maketitle
\begin{abstract}
  We provide bounds on the size of operators obtained by algorithms 
  for executing D-finite closure properties.
  For operators of small order, we give bounds on the degree and on
  the height (bit-size). 
  For higher order operators, we give degree bounds that are parameterized
  with respect to the order and reflect the phenomenon that higher 
  order operators may have lower degrees (order-degree curves).
\end{abstract}


\category{I.1.2}{Computing Methodologies}{Symbolic and Algebraic Manipulation}[Algorithms]


\terms{Algorithms}


\keywords{Ore Operators, Holonomic Functions}


\section{Introduction}

A common way of representing special functions in computer algebra systems is
via functional equations of which they are a solution, or equivalently, by
linear operators which map the function under consideration to zero. Functions
admitting such a representation are called \emph{$D$-finite.} Arithmetic on
D-finite functions translates into arithmetic of operators. For such
computations it is common that the output may be much larger than the input. But
how large? This is the question we wish to discuss in this paper.

Estimates on the output size are interesting because they enter in a crucial way
into the complexity analysis for the corresponding operations, and because
algorithms based on evaluation/interpolation depend on an a-priori knowledge of
the size of the result. Bounds on the bit size are also needed for the design of
``two-line algorithms'' in the sense of~\cite{yen96}. For these reasons, there has
been some activity concerning bounds in recent years, especially for estimating
the sizes of operators arising from creative
telescoping~\cite{mohammed05,bostan10b,chen12b,chen12c,kauers14d}, i.e.,
algorithms for definite summation and integration.

The focus in the present paper is on \emph{closure properties.} Closure
properties refer to the fact that when $f$ and $g$ are D-finite, then so are
$f+g$ and $fg$ and various other derived functions. We say that the class of
D-finite functions is closed under these operations. Algorithms for ``executing
closure properties'' belong to the standard repertoire of computer
algebra since the 1990s~\cite{salvy94,mallinger96}. Our goal is to estimate the
size of operators annihilating $f+g$ or $fg$ depending on assumptions on the
sizes of operators annihilating $f$ and~$g$.

It is easy to get good bounds on the \emph{order} of the output of closure
property algorithms. Such bounds are well
known~\cite{stanley80,stanley99,kauers10j}. We add here bounds on the
\emph{degree} of the polynomial coefficients of the output operators, and also
on their \emph{height,} which measures the size of the coefficients in the
polynomial coefficients. We also give degree bounds that are parameterized by
the order and reflect the phenomenon that the degree decreases as the order
grows. Although all these results are in principle obtained by the same
reasoning as the classical bounds on the order, actually computing them is
somewhat more laborious. We therefore believe that it is worthwhile working them
out once and for all and having them available in the literature for reference.

\subsection{Notation}

Let $R$ be an integral domain. 
We consider the Ore algebra $\set A=R[x][\partial]$ with the commutation rule
\[
  \partial p = \sigma(p)\partial + \delta(p)\quad(p\in R[x])
\]
where $\sigma\colon R[x]\to R[x]$ is a homomorphism and $\delta\colon
R[x]\to R[x]$ is a $\sigma$-derivation. For definitions of these notions and
further basic facts about Ore algebras, see~\cite{bronstein96}.
Two important examples of Ore algebras are the algebra of linear differential
operators (where $\sigma=\mathrm{id}$ and $\delta=\frac{d}{dx}$) and the algebra
of linear recurrence operators (where $\sigma(x)=x+1$, $\sigma|_R=\mathrm{id}$
and $\delta=0$).

Elements of Ore algebras are called operators. We can let them act on
$R[x]$-modules $\mathcal{F}$ of ``functions'' in such a way that $p\cdot f=pf$
for all $p\in R[x]$ and $f\in\mathcal{F}$ and $(L+M)\cdot f=(L\cdot f)+(M\cdot
f)$ and $(LM)\cdot f=L\cdot(M\cdot f)$ for all $L,M\in\set A$ and all
$f\in\mathcal{F}$. A function $f\in\mathcal{F}$ is then called D-finite
(with respect to the action of $\set A$ on $\mathcal{F}$) if there exists
$L\in\set A\setminus\{0\}$ with $L\cdot f=0$.

Operators $L\in\set A$ have the form 
\[
  L=\ell_0+\ell_1\partial+\cdots+\ell_r\partial^r
\]
with $\ell_0,\dots,\ell_r\in R[x]$. When $\ell_r\neq0$, we call $\ord(P):=r$ the
\emph{order} of the operator~$L$. The \emph{degree} of $L$ is defined as the
maximum degree of its polynomial coefficients: $\deg(L):=\max_{i=0}^r\deg(\ell_i)$.

We assume that for the ground ring~$R$ a size function $\height\colon R\to\set R$ is
given with the properties $\height(0)=0$, $\height(a)\geq0$, $\height(a)=\height(-a)$, for all $a\in R$,
$\height(ab)\leq \height(a)+\height(b)$ for all $a,b\in R$, and 
\begin{equation}\label{eq:htsum}
 \height\Bigl(\sum_{i=1}^n a_i\Bigr)\leq\height(n-1)+\max_{i=1}^n \height(a_i)
\end{equation}
for any $a_1,\dots,a_n\in R$.  For example, when
$R=\set Z$, we can take $\height(a)=\log(1+|a|)$, and when $R=K[t]$, we can take
$\height(a)=1+\deg(a)$ (using $\deg(0):=-1$). The \emph{height} of a polynomial
$p=c_0+c_1x+\cdots+c_dx^d\in R[x]$ is defined as
$\height(p):=\max_{i=0}^d\height(c_i)$. Note that we have
\[
  \height(pq)\leq\height(\min\{\deg(p),\deg(q)\})+\height(p)+\height(q)
\]
for all $p,q\in R[x]$ (but of course $\height(1p)=\height(p)$). 
Observe that the height of a polynomial depends on the
basis of~$R[x]$ and that we use the standard basis $1,x,x^2,\dots$ in our
definition. The \emph{height} of an operator
$L=\ell_0+\ell_1\partial+\cdots+\ell_r\partial^r$ is defined as
$\height(L):=\max_{i=0}^r\height(\ell_i)$.

We also need to know how $\sigma$ and $\delta$ change the degree and the height
of elements of~$R[x]$. In order to avoid unnecessary notational and
computational overhead, let us assume throughout that
$\deg(\sigma(p))\leq\deg(p)$ and $\deg(\delta(p))\leq\deg(p)$ for all $p\in
R[x]$. This covers most algebras arising in applications. For the height, we
assume that a function $c\colon\set R^2\to\set R$ is given such that for all
$p,q\in R[x]$ with $\deg(p),\deg(q)\leq d$ and $\height(p),\height(q)\leq h$ we
have $\height(\pm\sigma(p)+\delta(q))\leq c(d,h)$. Note that this definition
implies $\height(\partial L)\leq c(\deg(L),\height(L))$ for every $L\in
R[x][\partial]$.  We assume that $c$ is nonnegative, in both arguments
non-decreasing, and satisfies a triangle inequality with respect to the second
argument. For example, for the algebra of differential operators we can take
$c(d,h)=\height(1)+\height(d)+h$, and a possible choice for the algebra of
recurrence operators is $c(d,h)=d\height(2)+h$.


We will need to iterate the function $c$ in the second argument, and we will write
the composed functions using the following notation:
\[
  c^{(n)}(d,h):=c(d,c^{(n-1)}(d,h)),\quad
  c^{(0)}(d,h):=h
\]
We assume that this function is also non-decreasing with respect to~$n$. With
this notation we then have $\height(\partial^n L)\leq
c^{(n)}(\deg(L),\height(L))$, and more generally, using also height properties
stated earlier,
\begin{alignat}1
  \height(ML)&\leq\height(\ord(M))+\height(\min\{\deg(M),\deg(L)\})\notag\\
  &\qquad{}+\height(M)+c^{(\ord(M))}(\deg(L),\height(L))\label{eq:ML}
\end{alignat}
for any two operators $L,M\in R[x][\partial]$. It is also not difficult to see
that when $p\in R[x]$ and $n\in\set N$, then for $p^{[n]}:=p\sigma(p)\cdots\sigma^{n-1}(p)$
we have
\begin{equation}\label{eq:rising}
 \height(p^{[n]})
 \leq (n-1)\height(\deg(p))+n\,c^{(n-1)}\bigl(\deg(p),\height(p)\bigr). 
\end{equation}


\subsection{Argument Structure}

If the function $f_1$ is annihilated by an operator $L_1$ and the function $f_2$
is annihilated by another operator~$L_2$, and if $L$ is an operator such that
$L=M_1L_1=M_2L_2$ for two other operators $M_1,M_2$, then $L$ annihilates the
function $f_1+f_2$. It is easy to see that such an operator $L$ always
exists. For, suppose $L_1=\ell_{1,0}+\cdots+\ell_{1,r}\partial^r$ and
$L_2=\ell_{2,0}+\cdots+\ell_{2,s}\partial^s$ are given.  Make an ansatz
$M_1=m_{1,0}+\cdots+m_{1,s}\partial^s$, $M_2=m_{2,0}+\cdots+m_{2,r}\partial^r$
with undetermined coefficients $m_{i,j}$ for two left multipliers. Compute the
coefficients of the operator $M_1L_1-M_2L_2$. They will be linear combinations
of the undetermined $m_{i,j}$ with coefficients in~$R[x]$. Equating coefficients
of $\partial^k$ in $M_1L_1-M_2L_2$ to zero gives a linear system over $R[x]$
with $(s+1)+(r+1)$ variables but only $(s+r)+1$ equations. This system must have
a nontrivial solution.

All the following arguments will be based on this idea: make an ansatz with
undetermined coefficients, compare coefficients, observe that there are more
variables than equations, conclude that there must be a solution. The technical
difficulty consists in deriving reasonably good estimates for the degrees and
the heights of the entries in the linear system. We then use the following lemma
to turn them into estimates on the size of the solution vectors.

\begin{lemma}\label{lemma:la}
  Let $A=((a_{i,j}))\in R[x]^{n\times m}$ be a matrix with 
  $\deg(a_{i,j})\leq d$ and $\height(a_{i,j})\leq h$ for all $i,j$. 
  Assume that $n<m$ so that the matrix has a nontrivial nullspace. 
  Then there exists a vector $v=(v_1,\dots,v_m)\in\ker A\subseteq R[x]^m\setminus\{0\}$ with
  $\deg(v_i)\leq nd$ and $\height(v_i)\leq \height(n!)+(n-1)\height(d)+n h$
  for all $i=1,\dots,m$. 
\end{lemma}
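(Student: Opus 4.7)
The plan is to construct a null vector explicitly via Cramer's rule applied to maximal minors of $A$, and then bound its degree and height directly from the Leibniz expansion of a determinant. First I would reduce to the case $\operatorname{rank}(A)=n$ by discarding redundant rows; a null vector of the reduced matrix is automatically a null vector of~$A$, and the bounds can only improve.

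Under full row rank, pick $n$ columns $j_1,\dots,j_n$ whose $n\times n$ submatrix $B$ has nonzero determinant in $R[x]$, and any further column $j_{n+1}$ (which exists since $m>n$). Define $v_{j_k}:=(-1)^k\det(A^{(k)})$ for $k=1,\dots,n+1$, where $A^{(k)}$ is obtained from the $n\times(n+1)$ submatrix on columns $j_1,\dots,j_{n+1}$ by deleting its $k$-th column, and set $v_j:=0$ for all other indices. The identity $Av=0$ is the standard duplicate-row trick: for each row $i$, the sum $\sum_{k=1}^{n+1} a_{i,j_k}v_{j_k}$ is (up to sign) the Laplace expansion along the last row of the $(n+1)\times(n+1)$ matrix obtained by appending row~$i$ once more, and this matrix has a repeated row. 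Nontriviality is ensured by $v_{j_{n+1}}=\pm\det(B)\neq0$.

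The bounds now reduce to bounding a single $n\times n$ determinant. Leibniz's formula
\[
  \det(A^{(k)})=\sum_{\sigma\in S_n}\sgn(\sigma)\prod_{i=1}^n a_{i,\sigma(i)}
\]
immediately gives $\deg(v_{j_k})\leq nd$. For the height, I would apply the assumed product rule $\height(pq)\leq\height(\min\{\deg p,\deg q\})+\height(p)+\height(q)$ inductively to a product $p_1\cdots p_n$ with $\deg(p_i)\leq d$ and $\height(p_i)\leq h$. Since the current factor always has degree at most~$d$, the $\min$ in each step is bounded by~$d$, and induction on $n$ yields $\height(p_1\cdots p_n)\leq(n-1)\height(d)+nh$. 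Summing the $n!$ terms of the Leibniz expansion and applying~\eqref{eq:htsum} gives $\height(v_{j_k})\leq\height(n!-1)+(n-1)\height(d)+nh$, which is at most the stated bound by monotonicity of~$\height$ on the natural numbers.

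The main technical obstacle is the telescoping height estimate for the product: one must verify that the $\min\{\deg p,\deg q\}$ does not accumulate across the $n-1$ multiplications but stays bounded by~$d$, so that the $\height(d)$ contribution remains linear rather than polynomial in~$n$. Everything else is routine bookkeeping on top of the cofactor construction.
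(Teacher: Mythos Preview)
Your argument is correct and is essentially the same as the paper's: both reduce to full row rank, build a null vector from $n\times n$ minors via Cramer's rule/cofactor expansion, and bound each minor through the Leibniz formula to get $\deg\leq nd$ and $\height\leq\height(n!)+(n-1)\height(d)+nh$. The only differences are cosmetic---you phrase the construction as the duplicate-row Laplace trick and spell out the telescoping product bound more explicitly, while the paper writes the column-replacement form of Cramer and states the determinant height bound directly.
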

\begin{proof}
  Let $k$ be the rank of $A$ when viewed as matrix over $\mathrm{Quot}(R[x])$.
  By choosing a maximal linearly independent set of rows from $A$, we may assume
  that $A\in R[x]^{k\times m}$. By permuting the columns if necessary, we may
  further assume that $A=(A_1,A_2)$ for some $A_1\in R[x]^{k\times k}$ and
  $A_2\in R[x]^{k\times(m-k)}$ with $\det(A_1)\neq0$. By Cramer's rule, the
  vector $(v_1,\dots,v_m)$ with $v_{k+1}=-\det(A_1)$, $v_i=0$ ($i=k+2,\dots,m$),
  and $v_i=\det(A_{1|i})$ ($i=1,\dots,k$) where $A_{1|i}$ is the matrix obtained
  from $A_1$ by replacing the $i$th column by the first column of~$A_2$ belongs 
  to~$\ker A$. From the determinant formula
  \[
    \det(A_1)=\sum_{\pi\in S_k} \sgn(\pi) \prod_{i=1}^k a_{i,\pi(i)}
  \]
  it follows that $\deg(\det(A_1))\leq k d$ and 
  \[
    \height(\det(A_1))\leq \height(k!) + (k-1)\height(d) + k h.
  \]
  The same bounds apply for all the determinants $\det(A_{1|i})$ and hence for all
  coordinates~$v_i$ of the solution vector. Since $k\leq n$, the claim follows.
\end{proof}

\section{Common Left Multiples (``Plus'')}

For the differential case, the computation of common left multiples was studied
in detail by Bostan et al.\ for ISSAC 2012~\cite{bostan12b}. Their Theorem~6
says that if $L$ is the least common left multiple of differential operators
$L_1,\dots,L_n$, then $\ord(L)\leq r:=\sum_{k=1}^n\ord(L_k)$ and
\[
  \deg(L)\leq(n(r+1)-r)\max_{k=1}^n\deg(L_k).
\]
Without insisting in $\ord(L)$ being minimal, we reprove this result for
arbitrary Ore algebras and supplement it with a bound on the height
(Section~\ref{sec:2.1}). We then give a bound on the degree of common multiples
of non-minimal order and show that the degree decreases as the order
grows (Section~\ref{sec:2.2}).

\subsection{Operators of Small Order}\label{sec:2.1}

By a common left multiple of ``small order'', we mean a left multiple of $L_k$
whose order is at most the sum of the orders of the~$L_k$. The actual order of
the \emph{least} common left multiple may be smaller than this, for instance if 
some of the $L_k$ have a non-trivial common right divisor. For investigating the size
of common left multiples of small order, we compare coefficients of $\partial^i$
and consider linear systems with coefficients in~$R[x]$.

\begin{theorem}\label{thm:lclm}
  Let $L_1,\dots,L_n\in R[x][\partial]$, suppose $\deg(L_k)\leq d$
  and $\height(L_k)\leq h$ for $k=1,\dots,n$. 
  Then there is a common left multiple $L\in R[x][\partial]$ of $L_1,\dots,L_n$ with
  \begin{alignat*}1
    \ord(L)&\leq r := \sum_{k=1}^n\ord(L_k)\\
    \deg(L)&\leq (n(r+1)-r)d\\
    \height(L)&\leq \vphantom{\sum_{i=1}^n}
    \height(r)+\height((n(r+1)-r-1)!)\\[-8pt]
    &\quad{}+ (n(r+1)-r-1)\height(d)\\
    &\quad{}+ (n(r+1)-r) c^{(r)}(d,h)
  \end{alignat*}
\end{theorem}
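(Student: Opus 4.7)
The plan is to generalize the ansatz strategy from the Argument Structure section to $n$ operators at once and then invoke Lemma~\ref{lemma:la}. Set $r_k := \ord(L_k)$ and make the ansatz $M_k = \sum_{j=0}^{r-r_k} m_{k,j}\partial^j$ with undetermined polynomials $m_{k,j}\in R[x]$. Require $M_1 L_1 = M_k L_k$ for $k=2,\ldots,n$; any nontrivial solution gives an operator $L := M_1 L_1 = \cdots = M_n L_n$ which is a common left multiple of order at most~$r$.

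To realize this as a linear system over $R[x]$, expand $\partial^j L_k = \sum_i b_{k,j,i}\partial^i$ and equate the coefficient of $\partial^i$ in each difference $M_1L_1-M_kL_k=0$. The resulting coefficient matrix has entries among the $\pm b_{k,j,i}$. The standing assumption $\deg(\sigma(p)),\deg(\delta(p))\leq\deg(p)$ gives $\deg(b_{k,j,i})\leq d$, and the inherited bound $\height(\partial^j L_k)\leq c^{(j)}(d,h)\leq c^{(r)}(d,h)$ (invoking monotonicity of $c^{(\cdot)}$ in the iteration index) gives $\height(b_{k,j,i})\leq c^{(r)}(d,h)$. Counting yields $(n-1)(r+1) = n(r+1)-r-1$ rows and $\sum_k(r-r_k+1) = n(r+1)-r$ columns, one more column than row. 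Lemma~\ref{lemma:la} applied with $n'=n(r+1)-r-1$ then produces a nonzero tuple $(m_{k,j})$ with $\deg(m_{k,j})\leq n'd$ and $\height(m_{k,j})\leq \height(n'!)+(n'-1)\height(d)+n'\,c^{(r)}(d,h)$.

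Nontriviality together with integrality of $R[x][\partial]$ (a standing assumption in the Ore setting) forces $L := M_1 L_1\neq 0$. The degree bound is immediate: $\deg(L)\leq\deg(M_1)+d\leq(n'+1)d = (n(r+1)-r)d$. For the height, write the coefficient of $\partial^i$ in $L$ as $\sum_{j=0}^{r-r_1} m_{1,j}\,b_{1,j,i}$. The product bound (with $\min\{\deg m_{1,j},\deg b_{1,j,i}\}\leq d$) gives each term height at most $\height(d)+\height(m_{1,j})+c^{(r)}(d,h)$, and \eqref{eq:htsum} applied to the at most $r+1$ terms contributes a further $\height(r)$. Substituting the bound on $\height(m_{1,j})$ and using $n'+1 = n(r+1)-r$ when collecting the $c^{(r)}(d,h)$ contributions (and $n'-1+1=n'$ when collecting the $\height(d)$ contributions) matches the expression in the theorem.

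The main obstacle is purely the bookkeeping in this height calculation: tracking the lemma's output plus one round of polynomial multiplication plus one summation, and verifying that the $n(r+1)-r-1$ and $n(r+1)-r$ factorials, $\height(d)$-counts, and $c^{(r)}(d,h)$-counts line up exactly as stated. Nothing else is genuinely harder than the $n=2$ argument already sketched in the Argument Structure section; one just needs care to apply the product bound only once (after pulling $m_{1,j}$ out, not inside $\partial^j L_1$) so that $\deg$ of $b_{1,j,i}$ rather than of $m_{1,j}$ is used in the minimum.
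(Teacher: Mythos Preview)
Your argument is correct and coincides with the paper's proof: the same ansatz for the $M_k$, the same row/column count $(n-1)(r+1)$ versus $n(r+1)-r$, the same degree/height bounds on the matrix entries, and the same appeal to Lemma~\ref{lemma:la}. The only cosmetic difference is that, for the final height bound on $L=M_1L_1$, the paper invokes the prepackaged estimate~\eqref{eq:ML} in one line, whereas you unroll that estimate by hand (one product bound plus one application of~\eqref{eq:htsum}); both computations yield exactly the same expression.
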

\begin{proof}
  Make an ansatz for $n$ operators 
  $M_k=m_{k,0}+m_{k,1}\partial+\cdots+m_{k,r-\ord(L_k)}\partial^{r-\ord(L_k)}$
  with undetermined coefficients $m_{k,i}$ ($k=1,\dots,n$; $i=0,\dots,r-\ord(L_k)$).
  We wish to determine the $m_{k,i}\in R[x]$ such that 
  \[
    M_1L_1=M_2L_2=\cdots=M_nL_n\ (=L)
  \]
  by comparing coefficients with respect to~$\partial$ and solving the resulting
  linear system. Each $M_kL_k$ is an operator of order~$r$ whose coefficients
  are $R[x]$-linear combinations of the undetermined $m_{k,i}$ with
  coefficients that are bounded in degree by $d$ and in height
  by~$c^{(r)}(d,h)$. Coefficient comparison therefore leads to a system of
  linear equations with $\sum_{k=1}^n(r-\ord(L_k)+1)=n
  r-\sum_{k=1}^n\ord(L_k)+n=n(r+1)-r$ variables and $(n-1)(r+1)=n(r+1)-r-1$
  equations, which according to Lemma~\ref{lemma:la} has a solution vector with
  coordinates $v_i$ with $\deg(v_i)\leq (n(r+1)-r-1)d$ and $\height(v_i)\leq\height((n(r+1)-r-1)!)
  +(n(r+1)-r-2)\height(d)+(n(r+1)-r-1)c^{(r)}(d,h)$. If $M_1$ is an operator with
  coefficients of this shape, we get for $L=M_1L_1$ the size estimates as
  stated in the theorem by~\eqref{eq:ML}.
\end{proof}

Experiments indicate that the bounds on order and degree are tight for random operators. 
The bound on the height seems to be off by a constant factor. 

\begin{experiment}
  Consider the algebra $\set Z[x][\partial]$ with $\sigma(x)=x+1$ and
  $\delta=0$, set $\height(a)=\log(1+|a|)$ for $a\in\set Z$, and define
  $c(d,h)=d\height(2)+h$.  Instead of the recursive definition $c^{(r)}(d,h)$ we
  use $c^{(r)}(d,h)=d\height(r+1)+h$, which is justified because $\delta=0$ and
  $\height(\sigma^r(p))\leq\deg(p)\height(r+1)+\height(p)$ for every $p\in\set
  Z[x]$. 

  For two randomly chosen operators $L_1,L_2\in\set Z[x][\partial]$ of order,
  degree, and height~$s$ ($s=2,4,8,16,32$) we found that the order and degree of
  their least common left multiple match exactly the bounds stated in the
  theorem. The bound stated for the height seems to overshoot by a constant
  factor only. The data is given in the first two rows of the following
  table. In the third and fourth row we give the corresponding data for random
  operators in $R[x][\partial]$ with $R=\set Z_{1091}[t]$ and $\height(a)=\deg_t(a)$. In
  this case, we can take $c(d,h)=h$ and find that the bound of
  Theorem~\ref{thm:lclm}  is tight.

  \begin{center}
    \begin{tabular}{c|c|c|c|c|c}
      $s$ & 2 & 4 & 8 & 16 & 32 \\\hline
      height bound & 46.8 & 163.2 & 635.7 & 2646.3 & 11403.3 \\ 
      actual height & 17.3 &  76.7 & 347.6 & 1615.9 & 7575.4 \\\hline
      height bound & 12 & 40 & 144 & 544 & 2112 \\
      actual height & 12 & 40 & 144 & 544 & 2112 
    \end{tabular}
  \end{center}
\end{experiment}

\subsection{Order-Degree Curve}\label{sec:2.2}

The next result says that there exist higher order common left multiples of
lower degree.  Also this was already observed by Bostan et al.~\cite{bostan12b}, who in
their Section~6 show that the total arithmetic size (order times degree) of
higher order common multiples may be asymptotically smaller than the arithmetic
size of the least common left multiples. We state this result more explicitly as
a formula for an \emph{order-degree curve,} a hyperbola which constitutes a
degree bound $d$ in dependence of the order~$r$ of the multiple. More results 
on order-degree curves can be found in~\cite{chen12b,chen12c,jaroschek13a}. 

Technically, the result is again obtained by making an ansatz and comparing
coefficients, but this time, coefficients with respect to $x^j\partial^i$ are
compared, and the resulting linear system has coefficients in~$R$ rather than
in~$R[x]$. According to our experience, non-minimal order operators of low
degree have unreasonably large height, which is why in practice they are used
only in domains where the height is bounded, such as finite fields. We have 
therefore not derived height bounds for these operators. A result on the height
of non-minimal operators arising in creative telescoping can be found in~\cite{kauers14d}.

\begin{theorem}\label{thm:curve1}
  Let $L_1,\dots,L_n\in R[x][\partial]$ with $r_i=\ord(L_i)$ and $d_i=\deg(L_i)$
  for all~$i$. Let
  \[
    r\geq\sum_{k=1}^n r_k \text{ and } 
    d\geq \frac{(r+1)\sum_{k=1}^n d_k - \sum_{k=1}^n r_kd_k}{r+1 - \sum_{k=1}^n r_k}.
  \]
  Then there exists a common left multiple $L\neq0$ of
  $L_1,\dots,L_n$ with $\ord(L)\leq r$ and $\deg(L)\leq d$.   
\end{theorem}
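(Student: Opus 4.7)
The plan is to adapt the ansatz-and-count strategy of Theorem~\ref{thm:lclm}, but this time to compare coefficients of the monomials $x^j\partial^i$ rather than only of $\partial^i$, so that the resulting homogeneous linear system has entries in $R$ instead of $R[x]$. This is what makes it possible to cut into the degree at the cost of raising the order.

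Concretely, for each $k=1,\dots,n$ I would make an ansatz
\[
  M_k = \sum_{i=0}^{r-r_k}\sum_{j=0}^{d-d_k} m_{k,i,j}\,x^j\partial^i
\]
with undetermined $m_{k,i,j}\in R$, so that $\ord(M_kL_k)\leq r$ and $\deg(M_kL_k)\leq d$ by construction. Expanding $M_kL_k$ via the commutation rule, the coefficient of each $x^j\partial^i$ is an $R$-linear combination of the $m_{k,i',j'}$ with coefficients determined by $L_k$. Imposing $M_1L_1=M_kL_k$ for $k=2,\dots,n$ and equating coefficients of $x^j\partial^i$ for $0\leq i\leq r$, $0\leq j\leq d$, yields a homogeneous linear system over $R$ with
\[
  V := \sum_{k=1}^n (r-r_k+1)(d-d_k+1)
\]
unknowns and $E := (n-1)(r+1)(d+1)$ equations.

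The core step is then an elementary manipulation showing that the two hypotheses force $V>E$. Expanding $V$ and cancelling against $E$ gives $V-E = (r+1)(d+1) - (r+1)\sum_k d_k - (d+1)\sum_k r_k + \sum_k r_k d_k$, so $V>E$ is equivalent to
\[
  (d+1)\Bigl(r+1-\sum_{k=1}^n r_k\Bigr) > (r+1)\sum_{k=1}^n d_k - \sum_{k=1}^n r_k d_k.
\]
The hypothesis $r\geq\sum_k r_k$ makes the left factor a positive integer, so dividing through is legal and the assumption $d\geq\bigl[(r+1)\sum d_k - \sum r_k d_k\bigr]/\bigl(r+1-\sum r_k\bigr)$, together with the extra ``$+1$'' on the left from $d+1$, yields the desired strict inequality.

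With $V>E$, the system admits a nonzero solution $(m_{k,i,j})$, and I would set $L:=M_1L_1$. By construction $\ord(L)\leq r$, $\deg(L)\leq d$, and $L=M_kL_k$ for every $k$, so $L$ is a common left multiple of $L_1,\dots,L_n$. To rule out $L=0$ I would use the fact that $R[x][\partial]$ is an integral domain for the Ore algebras under consideration, so $M_kL_k=0$ together with $L_k\neq0$ would force $M_k=0$; since the solution is nontrivial, at least one and hence (by the same argument applied pairwise) every $M_k$ is nonzero, whence $L\neq 0$. I do not foresee any genuine obstacle beyond the bookkeeping in the $V$-versus-$E$ calculation, which has to land on the precise hyperbolic form stated; everything else is strictly parallel to the proof of Theorem~\ref{thm:lclm}, only simpler because no height estimates are needed.
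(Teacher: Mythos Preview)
Your proposal is correct and matches the paper's own proof essentially line for line: the same ansatz $M_k=\sum_{i=0}^{r-r_k}\sum_{j=0}^{d-d_k} m_{k,i,j}x^j\partial^i$, the same variable count $V=\sum_k(r-r_k+1)(d-d_k+1)$ versus $E=(n-1)(r+1)(d+1)$ equations, and the same rearrangement $V-E=(r+1)(d+1)-(r+1)\sum d_k-(d+1)\sum r_k+\sum r_kd_k>0$. You even add the $L\neq0$ justification via the no-zero-divisors property of the Ore algebra, which the paper leaves implicit.
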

\begin{proof}
  For $r,d\geq0$, make an ansatz for $n$ operators 
  \[
    M_k=\sum_{i=0}^{r-r_k}\sum_{j=0}^{d-d_k} m_{i,j,k}\,x^j\partial^i
  \]
  with undetermined coefficients $m_{i,j,k}$. We wish to determine the $m_{i,j,k}\in R$
  such that $M_1L_1=\cdots=M_nL_n$. 
  Then $M_kL_k$ is a common left multiple of $L_1,\dots,L_n$ of order at most~$r$ and degree at most~$d$.
  Coefficient comparison in the ansatz gives a linear system over $R$ with 
  \begin{alignat*}1
    &\sum_{k=1}^n\sum_{i=0}^{r-r_k}\sum_{j=0}^{d-d_k}1\\
    &=n (r{+}1)(d{+}1) - (r{+}1)\sum_{k=1}^n d_k
    - (d{+}1)\sum_{k=1}^n r_k + \sum_{k=1}^n r_kd_k
  \end{alignat*}
  variables and $(n-1)(r+1)(d+1)$ equations. It has a solution when
  \begin{alignat*}1
    (r{+}1)(d{+}1) - (r{+}1)\sum_{k=1}^n d_k 
    - (d{+}1)\sum_{k=1}^n r_k + \sum_{k=1}^n r_kd_k
    > 0.
  \end{alignat*}
  For $r$ and $d$ satisfying the constraints in the theorem, this inequality is true. 
\end{proof}

\begin{experiment}
  For three operators $L_1,L_2,L_3$ of order~$5$ and degree~$5$, the theorem
  says that they admit a common left multiple $L$ of order~$r$ and degree~$d$
  for every $r\geq15$ and $d\geq\frac{15(r-4)}{r-14}$. When we took three such
  operators at random from the algebra $\set Z[x][\partial]$ with
  $\sigma(x)=x+1$ and $\delta=0$, we found the degrees of their left multiple
  to match this bound exactly. We also found that the leading coefficient of
  their least common left multiple $L$ had removable factor of degree~$150$, so
  that the order-degree curve from Theorem~\ref{thm:curve1} matches the
  order-degree curve given in Theorem~9 in~\cite{jaroschek13a}.
\end{experiment}

\section{Polynomials (``Times'')}

If two functions $f_1$ and $f_2$ are annihilated by operators $L_1,L_2$,
respectively, then a common left multiple $L$ of $L_1,L_2$ annihilates the sum
$f_1+f_2$. We now turn to operators $L$ which annihilate the product~$f_1f_2$,
more generally, some function $f$ that depends polynomially on given functions
$f_1,\dots,f_n$ and their derivatives (or shifts). Before we can do this, we
need to specify how operators should act on products of functions.

\def\R{\mathbf{R}}

\subsection{Actions on Polynomial Rings}

Consider the ring extension 
\[
  \R=R[x][y_{i,j}:i=1,\dots,n,j\geq0].
\] 
We want the Ore algebra $R[x][\partial]$ to act on $\R$ in such a way that
$p\cdot P=pP$ and $\partial\cdot(pP)=\sigma(p)(\partial\cdot P)+\delta(p)P$ 
and $\partial\cdot(P+Q)=(\partial\cdot P)+(\partial\cdot Q)$
for all $p\in R[x],P,Q\in\R$, and $\partial\cdot y_{i,j}=y_{i,j+1}$ for all $i\in\set
N$. The polynomial variables $y_{i,j}$ are meant to represent the functions
$\partial^j\cdot f_i$. For the product, we require that there are
$\alpha,\beta,\gamma\in\{0,1,-1\}$ such that for all $P,Q\in\R$ we have
\begin{equation}\label{eq:product}
  \partial\cdot(PQ)=\alpha\,(\partial\cdot P)(\partial\cdot Q)
                   +\beta\,((\partial\cdot P)Q + P(\partial\cdot Q))
                   +\gamma\, PQ.
\end{equation}
To fix the action, it then remains to specify how $\partial$ acts on~$R[x]$.
Two canonical options are $\partial\cdot p=\sigma(p)$ and $\partial\cdot p=\delta(p)$.

In the first case, i.e., when ``$\partial=\sigma$'', we have
\begin{alignat*}1
  \sigma(p)&=\partial\cdot p=\partial\cdot(p1)=\sigma(p)(\partial\cdot1)+\delta(p)\\
  &=\sigma(p)\sigma(1)+\delta(p)
   =\sigma(p)+\delta(p),
\end{alignat*}
so this option is only available when $\delta=0$, and then, since
\begin{alignat*}1
  \partial\cdot(pq)&=\sigma(p)(\partial\cdot q)+0=(\partial\cdot p)(\partial\cdot q)
\end{alignat*}
for all $p,q\in R[x]\subseteq\R$ we must have $\alpha=1,\beta=\gamma=0$ for the
multiplication rule.

There is more diversity when ``$\partial=\delta$''. For example, in the
differential case ($\sigma=\mathrm{id}$, $\delta=\frac{d}{dx}$), we have
$\alpha=0,\beta=1,\gamma=0$, and for difference operators
($\delta=\Delta=\sigma-\mathrm{id}$) we have $\alpha=1,\beta=1,\gamma=0$.

Observe that the action of $R[x][\partial]$ on $\R$ is an extension of the
action of $R[x][\partial]$ on~$R[x]$. 

\def\Deg{\operatorname{Deg}}
\def\Ord{\operatorname{Ord}}

Every $P\in\R$ is a polynomial in the variables $y_{i,j}$ with coefficients that
are polynomials in $x$ over~$R$. We write $\height(P)$ for the maximum of the
heights of all the elements of~$R$ appearing in coefficients of the polynomial,
$\deg(P)$ for the degree of $P$ with respect to~$x$, and
$\Deg(P)=(D_1,\dots,D_n)$ where $D_i$ is the total degree of $P$ when viewed as
polynomial in the variables $y_{i,0},y_{i,1},y_{i,2},\dots$. For such degree
vectors, we write $(D_1,\dots,D_n)\leq (E_1,\dots,E_n)$ if $D_i\leq E_i$ for
all~$i$. Addition and maxima of such vectors is meant component-wise. 
We write $\Ord(P)=(S_1,\dots,S_n)$ if
$S_i\in\set N$ is the largest index such that the variable $y_{i,S_i}$ appears
in~$P$.

A polynomial $P$ with $\Deg(P)=(D_1,\dots,D_n)$ is called homogeneous if it is
homogeneous with respect to each group $y_{i,0},y_{i,1},\dots$ of variables,
i.e., if for every monomial $\prod_{i,j}y_{i,j}^{e_{i,j}}$ in $P$ and every 
$i=1,\dots,n$ we have $\sum_j e_{i,j}=D_i$. 

\def\tbinom#1#2{{\textstyle\binom{#1}{#2}}}

\begin{lemma}\label{lemma:3}
\begin{enumerate}
\item\label{lemma:3.1} For homogeneous polynomials $P,Q\in\R$ with
  $\Ord(P)=(S_1,\dots,S_n)$, $\Deg(P)=(D_1,\dots,D_n)$, 
  $\Ord(Q)=(T_1,\dots,T_n)$, $\Deg(Q)=(E_1,\dots,E_n)$, 
  we have
  \begin{alignat*}1
    \Ord(PQ)&\leq\max\bigl\{\Ord(P),\Ord(Q)\bigr\}\\
    \Deg(PQ)&\leq\Deg(P)+\Deg(Q)\\
    \deg(PQ)&\leq\deg(P)+\deg(Q)\\
    \height(PQ)&\leq\min\Bigl\{
       \sum_{i=1}^n\height(\tbinom{D_i+S_i}{D_i}),
       \sum_{i=1}^n\height(\tbinom{E_i+T_i}{E_i})\Bigr\}\\
       &\quad{} +\height(\min\{\deg(P),\deg(Q)\})\\
       &\quad{} +\height(P)+\height(Q)
  \end{alignat*}
  The first term in the expression for $\height(PQ)$ can be dropped if $P$ or $Q$ 
  have just one monomial, in particular, when $P$ or $Q$ are in~$R[x]$. 
\item\label{lemma:3.2} For $k\in\set N$ and a polynomial $P\in\R$ with $\Deg(P)=(D_1,\dots,D_n)\neq(0,\dots,0)$ we have
  \begin{alignat*}1
    \Ord(\partial^k \cdot P)&\leq\Ord(P)+(k,k,\dots,k)\\
    \Deg(\partial^k \cdot P)&\leq\Deg(P)\\
    \deg(\partial^k \cdot P)&\leq\deg(P)\\
    \height(\partial^k \cdot P)&\leq k\height(4)\sum_{i=1}^nD_i+c^{(k)}(\deg(P),\height(P))
  \end{alignat*}
\end{enumerate}
\end{lemma}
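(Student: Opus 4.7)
My plan is to handle the two parts separately, reducing both to combinatorial counts of $y$-monomials combined with the height inequalities collected in the notation section.

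For part (\ref{lemma:3.1}), I would first dispose of the three structural bounds. Since no new $y$-variable appears in a polynomial product, $\Ord(PQ)\le\max\{\Ord(P),\Ord(Q)\}$; the bounds on $\Deg(PQ)$ and $\deg(PQ)$ are just additivity of (multi-)degrees under multiplication. For the height bound, the key count is that a polynomial homogeneous of degree $D_i$ in the $S_i+1$ variables $y_{i,0},\dots,y_{i,S_i}$ carries at most $\binom{D_i+S_i}{D_i}$ monomials in that group, so $P$ contains at most $\prod_{i=1}^n\binom{D_i+S_i}{D_i}$ distinct $y$-monomials in total (and similarly for $Q$). The $R[x]$-coefficient of each $y$-monomial in $PQ$ is a sum over factorizations of products $c_P(m')\,c_Q(m'')$, with at most this many summands (hence the min). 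Bounding each summand by the height-of-product inequality from the notation section, invoking \eqref{eq:htsum}, and using $\height(ab)\le\height(a)+\height(b)$ to distribute the height of the count yields exactly the stated bound; when one factor is a single monomial there is no sum to form and the counting term drops.

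For part (\ref{lemma:3.2}), I would prove the case $k=1$ first and then induct on~$k$. The order bound is immediate from $\partial\cdot y_{i,j}=y_{i,j+1}$ and the fact that \eqref{eq:product} creates no new $y$-indices; the multidegree bound follows because each of the four terms on the right of \eqref{eq:product} carries the same multi-$y$-degree as $PQ$ and $\partial$ preserves the degree of a single $y$-variable in its group; the $x$-degree bound rests on the standing assumptions that $\sigma$ and $\delta$ do not increase $\deg_x$. For the height at $k=1$, I would view each monomial $m$ of $P$ of total $y$-degree $M_m\le M:=\sum_iD_i$ as an iterated binary product of its $M_m+1$ atomic factors (the $y$-variables together with the $R[x]$-coefficient) and apply \eqref{eq:product} at each of the $M_m$ internal splits; in the worst case (all of $\alpha,\beta,\gamma$ nonzero) this produces at most $4^{M_m}\le 4^M$ output terms of the form $c'\cdot m'$, each with $c'\in R[x]$ of height at most $c(\deg P,\height P)$, since $c'$ arises from $c_m$ by a single $\partial$-action on $R[x]$. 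A careful bookkeeping then shows that, for each output $y$-monomial $m'$, the coefficient in $\partial\cdot P$ is assembled from at most $4^M$ such contributions, so \eqref{eq:htsum} yields $\height(\partial\cdot P)\le\height(4^M-1)+c(\deg P,\height P)\le M\height(4)+c(\deg P,\height P)$. For the induction step, the $\Ord$, $\Deg$, and $\deg$ bounds propagate cleanly, while applying the $k=1$ height estimate to $\partial^{k-1}P$ and substituting the inductive bound for $\height(\partial^{k-1}P)$ into the second argument of~$c$ collapses, via the triangle inequality and monotonicity of $c$ in its second argument, into $k\height(4)\sum_iD_i+c^{(k)}(\deg P,\height P)$.

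The hard part will be justifying the $4^M$ count in part (\ref{lemma:3.2}): one needs both that a single application of $\partial$ to a degree-$M$ monomial produces at most $4^M$ raw terms via iterated binary use of \eqref{eq:product}, and that aggregating across all monomials of $P$ does not inflate this bound (intuitively, for each output $m'$ there are at most $2^M$ possible predecessor $m$'s, since each $y_{i,j}$ in $m'$ either equals or is the shift of some $y_{i,j-1}$ or $y_{i,j}$ in $m$, and at most $2^M$ product-rule paths per predecessor, together giving $4^M$). A secondary subtlety is the induction on~$k$: the nested $c$-expression produced by substituting the inductive estimate into the $k=1$ bound only telescopes to the clean $c^{(k)}$ form under the assumed triangle inequality of $c$ in its second argument, which for the canonical affine-linear choices of $c$ in the paper's examples reduces to a straightforward identity.
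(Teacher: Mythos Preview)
Your proposal is correct and follows essentially the same route as the paper: part~(\ref{lemma:3.1}) via the monomial count $\prod_i\binom{D_i+S_i}{D_i}$ and the product/sum height inequalities, and part~(\ref{lemma:3.2}) via reduction to $k=1$ plus induction, with the $4^D$ bound obtained exactly as you indicate in your ``hard part'' paragraph---the paper likewise splits the count into at most $2^D$ predecessor monomials and at most $2^D{-}1$ product-rule paths from each predecessor to a fixed output monomial. Your earlier phrasing (``at most $4^{M_m}$ output terms'' from one input monomial) is a slightly different and looser count than what is actually needed, but your subsequent $2^M\times 2^M$ decomposition is the correct argument and matches the paper's two ``effects''; the induction step you spell out using the triangle inequality for $c$ is more explicit than the paper, which simply asserts that the general case follows by repeating the $k=1$ argument.
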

\begin{proof}
  1. The claims on orders and degrees are clear. For the claim on the height, observe
  that the coefficient of every monomial in $PQ$ is a sum over products~$pq$, where
  $p$ is a coefficient of $P$ and $q$ a coefficient of~$Q$. 
  We have 
  \begin{alignat*}1
    \height(pq)
    &\leq\height(\min\{\deg(p),\deg(q)\})+\height(p)+\height(q)\\
    &\leq\height(\min\{\deg(P),\deg(Q)\})+\height(P)+\height(Q). 
  \end{alignat*}
  When $p$ or $q$ have just one monomial, this completes the proof. 
  Otherwise, the number of summands $pq$
  in such a sum is bounded by the number of terms in $P$ and by the number
  of terms in~$Q$. The claim follows because a homogeneous polynomial of degree $D_i$
  in $S_i+1$ variables has at most $\binom{D_i+S_i+1-1}{D_i}$ terms. 

  2. It suffices to consider the case $k=1$. The general case follows by repeating
  the argument $k$ times. 
  The claims on orders and degrees follow directly from the product rule for the
  action of $\partial$ on $\R$ and the assumption that $\sigma$ and $\delta$ do
  not increase degree. 

  For the bound on the height, write $P=\sum_{\ell}p_\ell\tau_\ell$ for some
  $p_\ell\in R[x]$ and distinct monomials $\tau_\ell=\prod_{i,j}y_{i,j}^{e_{i,j}}$. 
  Then $\partial\cdot P=\sum_\ell(\sigma(p_\ell)(\partial\cdot\tau_\ell)+\delta(p_\ell)\tau_\ell)$ can be
  written as a sum $\sum_m q_m\sigma_m$ where the $\sigma_m$ are
  distinct monomials and the $q_m$ are sums of several polynomials
  $\sigma(p_\ell)$ or $-\sigma(p_\ell)$, and possibly one polynomial~$\delta(p_\ell)$. Each of
  these polynomials has height at most $c(\deg P,\height P)$. 
  We show that these sums have at most $4^D$ summands, where $D=D_1+\cdots+D_n$. Then the claim follows from
  \eqref{eq:htsum} and $\height(4^D)\leq\height(4)D$. 
  For one part, the number of summands is caused by the fact that for two fixed monomials
  $\sigma$ and~$\tau$, the application of $\partial$ to $\tau$ may create the monomial $\sigma$
  more than once. For the other part, a fixed term $\sigma$ may turn up for several terms~$\tau$.
  We need to discuss both effects. 

  For the first effect, for any two monomials $\sigma,\tau$ let $a_{\sigma,\tau}$ be the
  number of times the monomial $\sigma$ appears in $\partial\cdot\tau$, 
  and set $a_{\sigma,\tau}:=0$ if $\sigma$ or $\tau$ is not a monomial.
  We show by induction on $D$ that $a_{\sigma,\tau}\leq 2^D-1$. 
  For $D=1$ we have $\tau=y_{i,j}$ for some $i,j$, so $\partial\cdot\tau=y_{i,j+1}$,
  so $a_{\sigma,\tau}=[[\sigma=y_{i,j+1}]]\leq 1=2^1-1$, where $[[\cdot]]$ denotes
  the Iverson bracket. Now assume the bound is true for $D-1\geq1$. 
  Writing $\tau=\tilde\tau y_{i,j}$, the product rule~\eqref{eq:product} gives
  \[
    \partial\cdot(\tilde\tau y_{i,j})
    =\alpha (\partial\cdot\tilde\tau) y_{i,j+1}
    +\beta (\partial\cdot\tilde\tau) y_{i,j}
    +\beta \tilde\tau y_{i,j+1}
    +\gamma \tilde\tau y_{i,j}.
  \]
  It follows that
  \[
    a_{\sigma,\tau} \leq 
    \underbrace{
    \underbrace{\underbrace{a_{\sigma\!/y_{i,j+1},\tilde\tau}}_{\leq 2^{D-1}-1} + 
                         \underbrace{a_{\sigma\!/y_{i,j},\tilde\tau}}_{\leq 2^{D-1}-1}}_{\leq 2^D-2} + 
    \underbrace{[[\sigma{=}\tilde\tau y_{i,j}]] + [[\sigma{=}\tilde\tau y_{i,j+1}]]}_{\leq 1}
    }_{\leq 2^D-1},
  \]
  as claimed. 

  For the second effect, the total number of contributions to a coefficient $q_m$ in $\partial\cdot P$
  is bounded by $\sum_\tau a_{\sigma_m, \tau}\leq\sum_\tau (2^D-1)$. For the summation
  range, it suffices to let $\tau$ run over at most $2^D$ ``neighbouring'' terms of~$\sigma_m$, 
  for if $\sigma_m=y_{i_1,j_1}y_{i_2,j_2}\cdots y_{i_D,j_D}$, then the only terms $\tau$ for
  which $\partial\cdot\tau$ may involve $\sigma_m$ are those of the form
  \[
   y_{i_1,j_1-e_1}y_{i_2,j_2-e_2}\cdots y_{i_D,j_D-e_D}
  \]
  with $(e_1,\dots,e_D)\in\{0,1\}^D$. These are $2^D$ many.
\end{proof}


\subsection{Normal Forms}

If the functions $f_1,\dots,f_n\in\mathcal{F}$ are solutions of the operators $L_1,\dots,L_n$
then every function 
\[
  f=P(f_1,\dots,f_n,\ \dots\dots, \partial^m\cdot f_1,\dots,\partial^m\cdot f_n)
\]
where $P$ is a multivariate polynomial is again D-finite. To see this, it
suffices to observe that D-finiteness is preserved under addition,
multiplication, and application of~$\partial$, because the expression for $f$
can be decomposed into a finite number of these operations. For computing an
annihilating operator for~$f$, it suffices to have algorithms for performing
these closure properties and apply them repeatedly.  For obtaining a bound on the
order of an annihilating operator for~$f$, it suffices to have such bounds for
these operations.  However, it turns out that the bounds obtained in this way
overshoot significantly, and the corresponding algorithm has a horrible
performance.

It is much better to consider an algorithm that computes an annihilating
operator for $f$ directly from the polynomial~$P$, and this is what we will do
next. Observe that the relations $L_i\cdot f_i=0$ can be used to rewrite~$f$
as another polynomial $V$ in the functions $\partial^j\cdot f_i$ with
$j<\ord(L_i)$ only. In the following lemma, we analyze how the size of $V$ depends
on the size of~$P$.

\begin{lemma}\label{lemma:4}
  Let $L_1,\dots,L_n\in R[x][\partial]$, $r_i=\ord(L_i)$, $p_i=\lc(L_i)$ ($i=1,\dots,n$)
  and consider the ideal 
  \begin{alignat*}1
    \mathfrak{a}=\bigl\langle 
      &L_1\cdot y_{1,0}, \ \partial L_1\cdot y_{1,0}, \ \partial^2 L_1\cdot y_{1,0},\dots\\
      &L_2\cdot y_{2,0}, \ \partial L_2\cdot y_{2,0}, \ \partial^2 L_2\cdot y_{2,0},\dots\\
      &\dots\\
     &L_n\cdot y_{n,0}, \ \partial L_n\cdot y_{n,0}, \ \partial^2 L_n\cdot y_{n,0},\dots
    \bigr\rangle\subseteq\R.
  \end{alignat*}
  For every $m\in\set N$ and every homogeneous polynomial $P\in\R$ with $\Deg(P)=(D_1,\dots,D_n)$
  and $\Ord(P)<(r_1+m,\dots,r_n+m)$
  there exists a homogeneous polynomial $V\in\R$ with 
  \[
  \biggl(\prod_{i=1}^n (p_i^{D_i})^{[m]}\biggr) P\equiv V\bmod\mathfrak{a}
  \]
  and
  \begin{alignat*}1
    \Ord(V)&<(r_1,\dots,r_n)\\
    \Deg(V)&\leq(D_1,\dots,D_n)\\
    \deg(V)&\leq\deg(P) + m\sum_{i=1}^n D_i\deg(L_i)\\
    \height(V)&\leq \height(P)+
    m\sum_{i=1}^n\Bigl(\height(D_i+1)+ D_i\height(r_i+m)\\
    &{}+D_i\height(\deg(L_i))\vphantom{c^{(m)}}+D_i\,c^{(m)}(\deg(L_i),\height(L_i))\smash{\Bigr)}.
  \end{alignat*}
\end{lemma}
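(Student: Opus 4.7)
I would prove this by induction on $m$, peeling off the top layer of $y$-variables at each step. The base case $m=0$ is immediate: take $V=P$; the product $\prod_i(p_i^{D_i})^{[0]}=1$ makes the congruence trivial, and all bounds degenerate to $\height(V)\le\height(P)$, etc.

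For the inductive step, the key ingredient is the relation $\partial^{m-1}L_i\cdot y_{i,0}\equiv 0\pmod{\mathfrak{a}}$, which rewrites as
\[
\sigma^{m-1}(p_i)\,y_{i,r_i+m-1}\equiv -\sum_{j=0}^{r_i+m-2}a_{i,j}^{(m-1)}\,y_{i,j}\pmod{\mathfrak{a}}
\]
where the $a_{i,j}^{(m-1)}\in R[x]$ are the $\partial^j$-coefficients of $\partial^{m-1}L_i$, so $\deg(a_{i,j}^{(m-1)})\le\deg(L_i)$ and $\height(a_{i,j}^{(m-1)})\le c^{(m-1)}(\deg(L_i),\height(L_i))\le c^{(m)}(\deg(L_i),\height(L_i))$ by the iterated assumption $\height(\partial L)\le c(\deg(L),\height(L))$. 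I would then multiply $P$ by $\prod_i\sigma^{m-1}(p_i)^{D_i}$, which ensures that every monomial of the product containing $a_i\le D_i$ copies of $y_{i,r_i+m-1}$ carries enough factors of $\sigma^{m-1}(p_i)$ in its coefficient to permit $a_i$-fold substitution via the identity above. The resulting $P_1$ satisfies $\Ord(P_1)<(r_1+m-1,\dots,r_n+m-1)$, $\Deg(P_1)\le\Deg(P)$, and $P_1\equiv P\cdot\prod_i\sigma^{m-1}(p_i)^{D_i}\pmod{\mathfrak{a}}$. Applying the induction hypothesis to $P_1$ at level $m-1$ produces $V$ with $\Ord(V)<(r_1,\dots,r_n)$ and $V\equiv P\cdot\prod_i(p_i^{D_i})^{[m]}\pmod{\mathfrak{a}}$, via the telescoping identity $\sigma^{m-1}(p_i)^{D_i}\cdot(p_i^{D_i})^{[m-1]}=(p_i^{D_i})^{[m]}$.

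The order and $y$-degree bounds follow directly from the construction. For the $x$-degree, multiplication by $\prod_i(p_i^{D_i})^{[m]}$ contributes $m\sum_iD_i\deg(p_i)$, while each substitution trades a factor of $\sigma^k(p_i)$ (degree $\deg(p_i)$) for a polynomial of degree at most $\deg(L_i)$; telescoping over the $m$ layers and $D_i$ factors per group gives the stated $m\sum_iD_i\deg(L_i)$.

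The height bound is the main obstacle and requires careful bookkeeping at each layer. The new coefficient of a monomial in $P_1$ is a sum of contributions indexed, for each group $i$, by the number $a_i\in\{0,\dots,D_i\}$ of top-order factors in some preimage monomial (contributing $\height(D_i+1)$ via \eqref{eq:htsum}) and the choices of the $\le r_i+m$ lower-order $y_{i,j}$'s into which each substitution expands (contributing $D_i\height(r_i+m)$). Each such contribution is a product of a coefficient of $P$ with at most $D_i$ copies of $a_{i,j}^{(m-1)}$ per group, of degree at most $\deg(L_i)$ and height at most $c^{(m)}(\deg(L_i),\height(L_i))$; iterating the product-height inequality yields the further additive terms $D_i\height(\deg(L_i))+D_ic^{(m)}(\deg(L_i),\height(L_i))$. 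Summing these per-layer, per-group increments over the $m$ layers of induction and the $n$ groups reproduces exactly the four-summand bound stated in the lemma; the technical delicacy lies in ensuring the additive constants from \eqref{eq:htsum} and the product-height inequality fit cleanly into this form rather than accumulating extra terms.
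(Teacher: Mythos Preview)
Your proposal is correct and follows essentially the same route as the paper: induction on $m$, with the inductive step carried out by multiplying $P$ by $\prod_i\sigma^{m-1}(p_i)^{D_i}$, using the relation $\partial^{m-1}L_i\cdot y_{i,0}\in\mathfrak{a}$ to eliminate all occurrences of $y_{i,r_i+m-1}$, and then invoking the hypothesis at level $m-1$ together with the telescoping identity for the rising product. The paper organizes the height estimate via the decomposition $P=\sum_{j_1,\dots,j_n}P_{j_1,\dots,j_n}\prod_i y_{i,r_i+m-1}^{j_i}$ and repeated application of its Lemma~\ref{lemma:3}.(\ref{lemma:3.1}) to $\tilde Q_i=\sigma^{m-1}(p_i)y_{i,r_i+m-1}-(\partial^{m-1}L_i\cdot y_{i,0})$, which is exactly your ``per-layer'' accounting of the $\height(D_i+1)$, $D_i\height(r_i+m)$, $D_i\height(\deg(L_i))$, and $D_i\,c^{(m)}(\deg(L_i),\height(L_i))$ contributions; one small omission in your sketch is that each contribution also carries $D_i-a_i$ factors of $\sigma^{m-1}(p_i)$, but since these satisfy the same height bound as the $a_{i,j}^{(m-1)}$ this does not affect the final estimate.
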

\begin{proof}
  Induction on~$m$. For $m=0$ there is nothing to show (take $V=P$). Suppose
  the claim is true for~$m-1$. Write
  \[
    P = \sum_{j_1=0}^{D_1}\cdots\sum_{j_n=0}^{D_n} P_{j_1,\dots,j_n} \prod_{i=1}^n y_{i,r_i+m-1}^{j_i}
  \]
  for some $P_{j_1,\dots,j_n}$ with $\Ord(P_{j_1,\dots,j_n})<(r_1+m-1,\dots,r_n+m-1)$
  and 
  $\Deg(P_{j_1,\dots,j_n})\leq (D_1-j_1,\dots,D_n-j_n)$. Then
  \begin{alignat*}1
    &\biggl(\prod_{i=1}^n \sigma^{m-1}(p_i^{D_i})\biggr)P\\
    ={}& \sum_{j_1=0}^{D_1}\cdots\sum_{j_n=0}^{D_n} \tilde P_{j_1,\dots,j_n} 
      \prod_{i=1}^n \bigl(\sigma^{m-1}(p_i) y_{i,r_i+m-1}\bigr)^{j_i}\\
    \equiv{}& \underbrace{\sum_{j_1=0}^{D_1}\cdots\sum_{j_n=0}^{D_n} \tilde P_{j_1,\dots,j_n} 
      \prod_{i=1}^n\tilde Q_i^{j_i}}_{=:\tilde P}\mod\mathfrak{a},
  \end{alignat*}
  where 
  \begin{alignat*}1
    \tilde P_{j_1,\dots,j_n}&=P_{j_1,\dots,j_n}\prod_{i=1}^n\sigma^{m-1}(p_i)^{D_i-j_i}\\
    \tilde Q_i&= \sigma^{m-1}(p_i) y_{i,r_i+m-1}
       - \bigl(\partial^{m-1}L_i\cdot y_{i,0}\bigr).
  \end{alignat*}
  First, because of $\Ord(\tilde P_{j_1,\dots,j_n}),\Ord(\tilde Q_i^{j_i})<(r_1+m-1,\dots,\break r_n+m-1)$
  we have $\Ord(\tilde P)<(r_1+m-1,\dots,r_n+m-1)$.
  Second, because of $\Deg(\tilde P_{j_1,\dots,j_n})=\Deg(P_{j_1,\dots,j_n})\leq (D_1-j_1,\dots,D_n-j_n)$
  and $\Deg(\prod_{i=1}^n\tilde Q_i^{j_i})\leq (j_1,\dots,j_n)$ we have
  $\Deg(\tilde P)\leq(D_1,\dots,D_n)$.
  Third, because of 
  \begin{alignat*}1
    \deg(\tilde P_{j_1,\dots,j_n})&\leq\deg(P)+\sum_{i=1}^n(D_i-j_i)\deg(L_i),\\
    \deg\biggl(\prod_{i=1}^n\tilde Q_i^{j_i}\biggr)&\leq \sum_{i=1}^n j_i\deg(L_i)
  \end{alignat*}
  we have $\deg(\tilde P)\leq\deg(P)+\sum_{i=1}^n D_i\deg(L_i)$. Fourth,
  because of these degree estimates and 
  \begin{alignat*}1
   \height(\tilde P_{j_1,\dots,j_n})&\leq\height(P)+\sum_{i=1}^n (D_i-j_i)\Bigl(\height(\deg(L_i))\\
   &\qquad{}+c^{(m)}(\deg(L_i),\height(L_i))\Bigr)
  \end{alignat*}
  and $\height(\tilde Q_i)\leq c^{(m)}(\deg(L_i),\height(L_i))$, we have, by $\sum_{i=1}^n j_i$ fold
  application of Lemma~\ref{lemma:3}.(\ref{lemma:3.1}),
  \begin{alignat*}1
    \height\Bigl(\tilde P_{j_1,\dots,j_n}\prod_{i=1}^n\tilde Q_i^{j_i}\Bigr)
    &\leq 
      \height(P) + \sum_{i=1}^n \Bigl(j_i\height(r_i+m)\\
    &\kern-5em{}+D_i\height(\deg(L_i))+D_ic^{(m)}(\deg(L_i),\height(L_i))\Bigr)
  \end{alignat*}
  and therefore, because $\tilde P$ is a sum of at most $\prod_{i=1}^n(D_i+1)$ such terms,
  \begin{alignat*}1
    \height(\tilde P)\leq{}&\height(P)+
    \sum_{i=1}^n\Bigl(\height(D_i+1)+ D_i\height(r_i+m)\\
    &{}+D_i\height(\deg(L_i))+D_i\,c^{(m)}(\deg(L_i),\height(L_i))\Bigr).
  \end{alignat*}
  By induction hypothesis, there exists $V$ such that
  \[
    \biggl(\prod_{i=1}^n (p_i^{D_i})^{[m-1]}\biggr)\tilde P\equiv V\bmod\mathfrak{a}
  \]
  with $\Ord(\tilde V)<(r_1,\dots,r_n)$, $\Deg(\tilde V)\leq(D_1,\dots,D_n)$,
  \begin{alignat*}1
    \deg(V)&\leq\deg(\tilde P)+(m-1)\sum_{i=1}^n D_i\deg(L_i)\\[-4pt]
                  &\leq\deg(P)+m\sum_{i=1}^n D_i\deg(L_i)\\
    \height(V)&\leq\height(\tilde P) + 
    (m{-}1)\sum_{i=1}^n\Bigl(\height(D_i{+}1)+ D_i\height(r_i{+}m{-}1)\\
    &{}+D_i\height(\deg(L_i))+D_i\,c^{(m-1)}(\deg(L_i),\height(L_i))\Bigr)\\
    &\leq\height(P)+
    m\sum_{i=1}^n\Bigl(\height(D_i+1)+ D_i\height(r_i+m)\\
    &{}+D_i\height(\deg(L_i))+D_i\,c^{(m)}(\deg(L_i),\height(L_i))\Bigr).
  \end{alignat*}
  Finally, because of 
  \begin{alignat*}1
    \biggl(\prod_{i=1}^n (p_i^{D_i})^{[m]}\biggr) P
    &= \biggl(\prod_{i=1}^n (p_i^{D_i})^{[m-1]}\biggr) 
      \biggl(\prod_{i=1}^n \sigma^{m-1}(p_i^{D_i})\biggr) P\\
    &\equiv 
      \biggl(\prod_{i=1}^n (p_i^{D_i})^{[m-1]}\biggr)\tilde P
    \equiv V\bmod\mathfrak{a},
  \end{alignat*}
  the polynomial $V$ has all the required properties.
\end{proof}

\subsection{Small Orders}

We are now ready to state the main result, which bounds the size of an operator
which annihilates a function given as a polynomial of $f_1,\dots,f_n$
and their derivatives or shifts.

We consider only homogeneous polynomials. If a function~$f$ is expressed in
terms of $f_1,\dots,f_n$ via an inhomogeneous polynomial~$P$, we can write
$P=P_1+P_2+\cdots+P_s$ where each $P_i$ is homogeneous, then apply the theorem
to the~$P_i$ separately, and then combine the resulting bounds using
Theorem~\ref{thm:lclm} to obtain a bound for~$P$. This is fair because it seems
that the overestimation explained at the beginning of the previous section only
happens when homogeneous components are not handled as a whole but subdivided
further into sums of even smaller polynomials.
  
\begin{theorem}\label{thm:main}
  Let $L_1,\dots,L_n\in R[x][\partial]$, $r_i=\ord(L_i)$ ($i=1,\dots,n$). 
  Let $\mathfrak{a}\subseteq\R$ be as in Lemma~\ref{lemma:4}.
  Let $P\in\R$ be a homogeneous polynomial with $\Deg(P)=(D_1,\dots,D_n)$
  and $\Ord(P)<(r_1,\dots,r_n)$. Then there exists an operator $L\in R[x][\partial]\setminus\{0\}$
  and a polynomial $p\in R[x]\setminus\{0\}$ with $pL\cdot P\in\mathfrak{a}$ and
  \begin{alignat*}1
    \ord(L)&\leq m:= \prod_{i=1}^n\tbinom{D_i+r_i-1}{D_i}\\[-4pt]
    \deg(L)&\leq m\deg(P)+m^2\sum_{i=1}^n D_i\deg(L_i)\\[-4pt]
    \height(L)&\leq \height(m!) +m\,c^{(m)}(\deg(P),\height(P)))\vphantom{\sum_{i=1}^n}\\[-4pt]
    &\quad{} +(m-1)\height\Bigl(\deg(P)+m\sum_{i=1}^n D_i\deg(L_i)\Bigr)\\[-4pt]
    &\quad{} + m^2\sum_{i=1}^n \Bigl(\height(4)D_i + \height(D_i+1) + D_i\height(r_i+m)\\[-3pt]
    &\qquad\qquad{} + \height(\deg(L_i)) +c^{(m)}(\deg(L_i),\height(L_i))\Bigr).
  \end{alignat*}
\end{theorem}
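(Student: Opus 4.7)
The plan is to apply the same ansatz/count/compare strategy used for Theorem~\ref{thm:lclm}, but now inside the quotient $\R/\mathfrak{a}$. The backbone is a dimension count: a homogeneous polynomial with $\Deg\leq(D_1,\dots,D_n)$ and $\Ord<(r_1,\dots,r_n)$ must, for each $i$, be of total degree $D_i$ in the $r_i$ variables $y_{i,0},\dots,y_{i,r_i-1}$, so there are at most $\binom{D_i+r_i-1}{D_i}$ such monomials per index. Hence the space of such polynomials is spanned, as an $R[x]$-module, by at most $m=\prod_{i=1}^n\binom{D_i+r_i-1}{D_i}$ monomials, which is the announced order bound.

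I then make the ansatz $L=\sum_{k=0}^m\ell_k\partial^k$ with $\ell_0,\dots,\ell_m\in R[x]$ undetermined, and set $p=\prod_{i=1}^n(p_i^{D_i})^{[m]}$. For each $k\leq m$, Lemma~\ref{lemma:3}(\ref{lemma:3.2}) controls the $x$-degree and height of $\partial^k\cdot P$ and gives $\Ord(\partial^k\cdot P)<(r_1+m,\dots,r_n+m)$; Lemma~\ref{lemma:4} applied with parameter~$m$ then produces a homogeneous $V_k$ with $p\cdot(\partial^k\cdot P)\equiv V_k\bmod\mathfrak{a}$, $\Ord(V_k)<(r_1,\dots,r_n)$, and $\Deg(V_k)\leq(D_1,\dots,D_n)$. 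Consequently $pL\cdot P\equiv\sum_{k=0}^m\ell_k V_k\bmod\mathfrak{a}$, and requiring the right-hand side to vanish leads, after equating coefficients of the at most $m$ monomials in the $y_{i,j}$ with $j<r_i$, to a linear system over $R[x]$ with $m+1$ unknowns and at most $m$ equations. Lemma~\ref{lemma:la} guarantees a nontrivial solution, and the operator built from it (together with~$p$) is the~$L$ we want.

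All that remains is to bookkeep the sizes. The degree of every entry of the linear system is at most $d:=\deg(P)+m\sum_i D_i\deg(L_i)$ by Lemma~\ref{lemma:4}, so Lemma~\ref{lemma:la} gives $\deg(\ell_k)\leq m\,d$, which is precisely the claimed degree bound. For the height, I combine the bound on $\height(\partial^k\cdot P)$ from Lemma~\ref{lemma:3}(\ref{lemma:3.2}) with the bound on $\height(V_k)$ from Lemma~\ref{lemma:4} and replace $k$ by its maximum $m$ to obtain a uniform bound $h$ on all entries of the linear system. Then $\height(\ell_k)\leq\height(m!)+(m-1)\height(d)+m\,h$ from Lemma~\ref{lemma:la} reassembles, after using the triangle inequality and monotonicity of $c^{(m)}$, precisely the four groups of summands appearing in the theorem's height bound.

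The main obstacle is not conceptual but purely arithmetical: the two ``inner'' height estimates (from Lemmas~\ref{lemma:3}(\ref{lemma:3.2}) and~\ref{lemma:4}) and the ``outer'' multiplier coming from Lemma~\ref{lemma:la} have to be composed carefully, the $c^{(k)}$ factor must be pushed up to $c^{(m)}$ without wasting slack, and the various $\height(\cdot)$ contributions have to be reorganised to match the stated form. I would check this by tracking each summand class in the bound separately, namely the $\height(m!)$ term, the $(m-1)\height(d)$ term, the $m\,c^{(m)}(\deg(P),\height(P))$ term, and finally the big $m^2\sum_i(\dots)$ term, verifying that the factors $\height(4)D_i$, $\height(D_i+1)$, $D_i\height(r_i+m)$, $\height(\deg(L_i))$, and $c^{(m)}(\deg(L_i),\height(L_i))$ all surface with the correct multiplicities.
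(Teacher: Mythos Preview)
Your approach is essentially the paper's: set $p=\prod_i(p_i^{D_i})^{[m]}$, reduce each $p\,(\partial^k\cdot P)$ modulo $\mathfrak{a}$ to a $V_k$ with $\Ord(V_k)<(r_1,\dots,r_n)$, compare coefficients of the at most $m=\prod_i\binom{D_i+r_i-1}{D_i}$ monomials, and invoke Lemma~\ref{lemma:la}. The order and degree bounds come out exactly as stated by your route.

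There is, however, one technical difference that matters for the height bound. You apply Lemma~\ref{lemma:4} directly with parameter $m$ to $\partial^k\cdot P$. The paper instead factors $p=\bigl(\prod_i(p_i^{D_i})^{[k]}\bigr)\bigl(\prod_i(\sigma^k(p_i)^{D_i})^{[m-k]}\bigr)$, first multiplies $\partial^k\cdot P$ by the second factor to form an intermediate $P_k$ (bounded via $m-k$ applications of Lemma~\ref{lemma:3}.\ref{lemma:3.1}, where one factor lies in $R[x]$ so the binomial term drops), and only then applies Lemma~\ref{lemma:4} with parameter $k$, which is legitimate because $\Ord(\partial^k\cdot P)<(r_1+k,\dots,r_n+k)$. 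The point of this split is that the height increment coming from Lemma~\ref{lemma:4} carries a factor $D_i$ on $\height(\deg(L_i))$ and on $c^{(\cdot)}(\deg(L_i),\height(L_i))$, whereas the premultiplication step via Lemma~\ref{lemma:3}.\ref{lemma:3.1} does not. If you plug $\partial^k\cdot P$ straight into Lemma~\ref{lemma:4} with parameter $m$, the $m^2\sum_i(\dots)$ block in your final height estimate will contain $D_i\height(\deg(L_i))$ and $D_i\,c^{(m)}(\deg(L_i),\height(L_i))$ rather than the unweighted $\height(\deg(L_i))$ and $c^{(m)}(\deg(L_i),\height(L_i))$ appearing in the theorem. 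So to reproduce the height bound in the precise form stated, you need the paper's two-stage reduction; your shortcut yields a bound that is correct but slightly looser in those two terms.
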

\begin{proof}
  Let $p_i=\lc(L_i)$ and $p=\prod_{i=1}^n(p_i^{D_i})^{[m]}$.
  We show that there exist $\ell_0,\dots,\ell_m\in R[x]$, not all zero, such that
  \begin{equation}\label{eq:2}
    p \sum_{k=0}^m \ell_k (\partial^k\cdot P)\in\mathfrak{a}.
  \end{equation}
  Consider the polynomials 
  \[
    P_k = \biggl(\prod_{i=1}^n (\sigma^k(p_i)^{D_i})^{[m-k]} \biggr) (\partial^k\cdot P)
  \]
  for $k=0,\dots,m$. 
  Bounds for $\partial^k\cdot P$ can be obtained from Lemma~\ref{lemma:3}.(\ref{lemma:3.2}).
  Applying Lemma~\ref{lemma:3}.(\ref{lemma:3.1}) with 
  $\prod_{i=1}^n \sigma^{k+j}(p_i)^{D_i}$ as $P$ and
  $\bigl(\prod_{i=1}^n (\sigma^k(p_i)^{D_i})^{[j]} \bigr) (\partial^k\cdot P)$ as~$Q$,
  for $j=0,\dots,m-k-1$ (so that there are altogether $m-k$ applications of the Lemma), we obtain
  \begin{alignat*}1
    \Ord(P_k)&<(r_1+k,\dots,r_n+k)\\
    \Deg(P_k)&=(D_1,\dots,D_n)\\
    \deg(P_k)&\leq\deg(P)+(m-k)\sum_{i=1}^n D_i\deg(L_i)\\
    \height(P_k)&\leq k\height(4)\sum_{i=1}^nD_i 
         + c^{(m)}(\deg(P),\height(P))\\
         &\kern-1em{}+
         (m-k)\sum_{i=1}^n \Bigl(\height(\deg(L_i)) 
         +c^{(m)}(\deg(L_i),\height(L_i))\Bigr)
  \end{alignat*}
  for all $k=0,\dots,m$, where we have used $c^{(k)}(\cdot,\cdot)\leq c^{(m)}(\cdot,\cdot)$,
  $\deg(p_i)\leq\deg(L_i)$, and $\height(p_i)\leq\height(L_i)$ to bring the expression for the
  height into the form stated here. 

  Using Lemma~\ref{lemma:4} and the above bounds for~$P_k$, we find for each $k\leq m$ a $V_k$ with
  \[
    \biggl(\prod_{i=1}^n (p_i^{D_i})^{[m]} \biggr) (\partial^k\cdot P)  
    =
    \biggl(\prod_{i=1}^n (p_i^{D_i})^{[k]} \biggr) P_k
    \equiv V_k\bmod\mathfrak{a}
  \]
  and $\Ord(V_k)<(r_1,\dots,r_n)$, $\Deg(V_k)=(D_1,\dots,D_n)$, 
  \begin{alignat*}1
    \deg(V_k)&\leq \deg(P)+m\sum_{i=1}^n D_i\deg(L_i)\\
    \height(V_k)&\leq k\height(4)\sum_{i=1}^nD_i 
         + c^{(m)}(\deg(P),\height(P))\\
         &\kern-1em{}+
         (m-k)\sum_{i=1}^n \Bigl(\height(\deg(L_i)) 
         +c^{(m)}(\deg(L_i),\height(L_i))\Bigr)\\
         &\kern-1em{}
         +k\sum_{i=1}^n\Bigl(\height(D_i+1)+D_i\height(r_i+k)
         +D_i\height(\deg(L_i))\\
         &\qquad{}+D_ic^{(k)}(\deg(L_i),\height(L_i))\Bigr)\\
         &\kern-1em\leq
         m\sum_{i=1}^n \Bigl(\height(4)D_i + \height(D_i+1) + D_i\height(r_i+m) \\ 
         &\qquad{}+ \height(\deg(L_i))+c^{(m)}(\deg(L_i),\height(L_i))\Bigr)\\
         &\kern-1em{}+c^{(m)}(\deg(P),\height(P)).\vphantom{\sum_{i=1}^n}
  \end{alignat*}
  In the ansatz $\sum_{k=0}^m\ell_k V_k\stackrel!=0$ with undetermined
  coefficients $\ell_0,\dots,\ell_m$, compare coefficients with respect to terms
  $\prod_{i,j} y_{i,j}^{e_{i,j}}$. This gives a linear system over $R[x]$ with
  $m+1$ variables, $\prod_{i=1}^n\binom{D_i+r_i-1}{D_i}=m$ equations, and with
  coefficients of degree at most $\deg(P)+m\sum_{i=1}^n D_i\deg(L_i)$
  and height at most 
  \begin{alignat*}1
         &m\sum_{i=1}^n \Bigl(\height(4)D_i+ \height(D_i+1) + D_i\height(r_i+m) + \height(\deg(L_i)) \\ 
         &\quad{}+c^{(m)}(\deg(L_i),\height(L_i))\Bigr)
         +c^{(m)}(\deg(P),\height(P)).
  \end{alignat*}
  By Lemma~\ref{lemma:la}, the theorem follows.  
\end{proof}

In its full generality, the theorem is a bit bulky. For convenient reference,
and as example applications, we rephrase it for three important special cases.
The first concerns simple products of the form $f_1f_2$ and powers~$f^k$, the
second is what is called ``D-finite Ore action'' in Koutschan's
package~\cite{koutschan10c}, and the third is the Wronskian.
Observe that the bound for the order of the symmetric power is lower than 
the bound that would follow by applying the bound for the symmetric product
$k-1$~times. 

\begin{corollary} \textbf{(Symmetric Product and Power)}
  \begin{enumerate}
  \item Let $L_1,L_2\in R[x][\partial]$ and let $f_1,f_2\in\mathcal{F}$ be solutions
    of $L_1,L_2$, respectively. Let $r_1=\ord(L_1)$ and $r_2=\ord(L_2)$ and let $d,h$
    be such that $\deg(L_1),\deg(L_2)\leq d$ and $\height(L_1),\height(L_2)\leq h$.
    Then there exists an operator $M\in R[x][\partial]$ with $M\cdot(f_1f_2)=0$ and
    \begin{alignat*}1
      \ord(M)&\leq r_1r_2,\qquad \deg(M)\leq 2d r_1^2r_2^2, \\
      \height(M)&\leq 
      \height((r_1r_2)!)+ (r_1r_2-1)\height(2r_1r_2 d)+ r_1r_2\height(1) \\
      &\kern-2em{}+ 2r_1^2r_2^2\bigl(2\height(4)+3\height(r_1r_2)
        +\height(d)+ c^{(r_1r_2)}(d,h)\bigr)
    \end{alignat*}
  \item Let $L\in R[x][\partial]$, $r=\ord(L)$, $d=\deg(L)$, $h=\height(L)$, and
    let $f\in\mathcal{F}$ be a solution of~$L$.  Let $k\in\set N$. Then there
    exists an operator $M\in R[x][\partial]$ with $M\cdot(f^k)=0$ and
    \begin{alignat*}1
      \ord(M)&\leq\tbinom{k+r}{k}=:m,\qquad\deg(M)\leq kdm^2,\\
      \height(M)&\leq\height(m!) + m\height(1) + (m-1)\height(mkd)\\
      &\quad{}+ m^2\bigl(k\height(4) + \height(k+1) + k\height(r+m)\\
      &\quad{}+\height(d)+ c^{(m)}(d,h)\bigr)
    \end{alignat*}
  \end{enumerate}
\end{corollary}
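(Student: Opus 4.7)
The approach is to obtain both parts as direct specializations of Theorem~\ref{thm:main}. Under the identification $y_{i,j} \leftrightarrow \partial^j \cdot f_i$, a homogeneous polynomial $P \in \R$ represents a function $P(f_1, \ldots, f_n) \in \mathcal{F}$, and the conclusion ``$pL \cdot P \in \mathfrak{a}$'' of the theorem translates to ``$pL$ annihilates $P(f_1, \ldots, f_n)$'' once we evaluate, since all generators of $\mathfrak{a}$ evaluate to zero by construction. Thus $M := pL$ gives the sought annihilator, with $p = \prod_i (p_i^{D_i})^{[m]}$ a clearing factor whose degree and height contribute only lower-order terms, absorbed into the dominant $m^2$-terms in the theorem's bounds (using~\eqref{eq:rising} to control $\height(p)$).

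For part~1, I take $n = 2$ and $P = y_{1,0} y_{2,0}$, which represents $f_1 f_2$. This $P$ is homogeneous with $\Deg(P) = (1,1)$, $\Ord(P) = (0,0) < (r_1, r_2)$, $\deg(P) = 0$, and $\height(P) = \height(1)$. Substituting these together with $d_1 = d_2 = d$, $h_1 = h_2 = h$ into Theorem~\ref{thm:main} yields $m = r_1 r_2$ and $\deg(M) \leq m^2(d+d) = 2 d r_1^2 r_2^2$; after collecting the $i = 1,2$ summands and using the easy estimates $\height(2) \leq \height(4)$ and $\height(r_i + m) \leq \height(2 r_1 r_2) \leq \height(r_1 r_2) + \height(2)$, the height bound in the theorem simplifies to the form stated here. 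For part~2, I take $n = 1$ and $P = y_{1,0}^k$, representing $f^k$, with $\Deg(P) = (k)$, $\Ord(P) = (0) < (r)$, $\deg(P) = 0$, $\height(P) = \height(1)$. Theorem~\ref{thm:main} then gives the order bound $\tbinom{k+r-1}{k}$, which I weaken to $\tbinom{k+r}{k} =: m$ to match the cleaner form in the statement; the remaining bounds follow by direct substitution with $D_1 = k$.

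The work is essentially routine, but the main obstacle is the bookkeeping: the height bound of Theorem~\ref{thm:main} contains several summands of very different growth rates---a $\height(m!)$ term, pieces linear in $m$, pieces of order $m^2$, and an iterated $c^{(m)}$ contribution---so one must carefully trace each summand in the corollary back to its origin in the theorem, and verify that every estimate used to consolidate the bookkeeping preserves monotonicity in the arguments of $\height$ and $c^{(\cdot)}$. No genuinely new ideas are required; in particular the slack in the symmetric-power order bound ($\tbinom{k+r}{k}$ in place of the tighter $\tbinom{k+r-1}{k}$) is harmless, because every degree and height bound appearing in the theorem is non-decreasing in $m$.
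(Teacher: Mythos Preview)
Your approach is exactly the paper's: apply Theorem~\ref{thm:main} with $n=2$, $P=y_{1,0}y_{2,0}$ for part~1 and with $n=1$, $P=y_{1,0}^k$ for part~2, recording the same data $\Ord(P)$, $\Deg(P)$, $\deg(P)$, $\height(P)$. Your observation about weakening $\tbinom{k+r-1}{k}$ to $\tbinom{k+r}{k}$, justified by monotonicity of all bounds in~$m$, is a detail the paper's proof passes over in silence.

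One small point deserves care. You set $M:=pL$ and assert that the contribution of the clearing factor $p=\prod_i(p_i^{D_i})^{[m]}$ is ``absorbed into the dominant $m^2$-terms''. But the corollary's degree bound, e.g.\ $2d\,r_1^2r_2^2$ in part~1, is \emph{exactly} the theorem's bound $m^2\sum_i D_i d_i$ for $L$; there is no slack to absorb the extra $\deg(p)\leq m\sum_i D_i d_i$ that multiplication by $p$ introduces. The paper simply takes $M=L$: from $pL\cdot P\in\mathfrak{a}$ one gets $p\,(L\cdot(f_1f_2))=0$ in~$\mathcal{F}$, and then $L\cdot(f_1f_2)=0$ under the (tacit) assumption that the function module has no $R[x]$-torsion. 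If you keep $M=pL$ you avoid that assumption but must accept a degree bound of the form $m(m+1)\sum_i D_i d_i$ rather than the one stated. Either fix is easy; just be explicit about which one you are using.
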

\begin{proof} For part~1, apply the theorem with $n=2$ and $P=y_{1,0}y_{2,0}$. 
  Note that $\Ord(P)=(0,0)<(r_1,r_2)$, $\Deg(P)=(1,1)$, $\deg(P)=0$, and $\height(P)=\height(1)$.
  For part~2, take $n=1$, $P=y_{1,0}^k$. Note that $\Ord(P)=0<r$, $\Deg(P)=k$, $\deg(P)=0$, and
  $\height(P)=\height(1)$. 
\end{proof}

\begin{corollary} \textbf{(Associates)}
  Let $L\in R[x][\partial]$ and let $f\in\mathcal{F}$ be a solution of~$L$.
  Let $A\in R[x][\partial]$ be another operator with $\ord(A)<\ord(L):=r$.
  Then $A\cdot f$ is annihilated by an operator $M$ with
  \begin{alignat*}1
    \ord(M)&\leq r,\qquad\deg(M)\leq r\deg(A) + r^2\deg(L),\\
    \height(M)&\leq \height(r!) + r\,c^{(r)}(\deg(A),\height(A))\\
    &\kern-2em{} + (r-1)\height(\deg(A) + r\deg(L))\\
    &\kern-2em{} + r^2\bigl(4\height(2) + \height(r) + \height(\deg(L)) + c^{(r)}(\deg(L),\height(L))\bigr)
  \end{alignat*}
\end{corollary}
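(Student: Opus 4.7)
The plan is to reduce the corollary directly to Theorem~\ref{thm:main}, in exactly the same way the previous corollary handles symmetric products and powers. The function $A\cdot f$ will be encoded as a homogeneous polynomial of degree one in the variables $y_{1,j}$, after which the theorem's bounds specialize almost literally to the claim.

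Explicitly, I would write $A=a_0+a_1\partial+\cdots+a_{r-1}\partial^{r-1}$ with $a_i\in R[x]$ and set
\[
  P=\sum_{i=0}^{r-1}a_i\,y_{1,i}\in\R.
\]
Interpreting $y_{1,j}$ as $\partial^j\cdot f$, the polynomial $P$ represents $A\cdot f$, so any operator $M$ with $pM\cdot P\in\mathfrak{a}$ (for some $p\in R[x]\setminus\{0\}$) will annihilate $A\cdot f$ on the function side. By construction $P$ is homogeneous with $\Deg(P)=(1)$ and $\Ord(P)\leq(r-1)<(r)$, and reading off heights and degrees from the definitions gives $\deg(P)\leq\deg(A)$ and $\height(P)\leq\height(A)$.

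With these parameters in hand, apply Theorem~\ref{thm:main} with $n=1$, $r_1=r$, $D_1=1$. One obtains $m=\binom{r}{1}=r$ immediately, and the degree bound specializes to $r\deg(A)+r^2\deg(L)$. Substituting into the theorem's height bound produces $\height(r!)+r\,c^{(r)}(\deg(A),\height(A))+(r-1)\height(\deg(A)+r\deg(L))$ plus the tail term $r^2\bigl(\height(4)+\height(2)+\height(2r)+\height(\deg(L))+c^{(r)}(\deg(L),\height(L))\bigr)$. The only remaining, purely cosmetic, step will be to rewrite $\height(4)+\height(2)+\height(2r)\leq 4\height(2)+\height(r)$ using submultiplicativity of $\height$ on products ($\height(4)\leq 2\height(2)$ and $\height(2r)\leq\height(2)+\height(r)$). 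I foresee no substantive obstacle; once the right polynomial $P$ is written down, everything is an instance of the main theorem plus minor rearrangement.
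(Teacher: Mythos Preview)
Your proposal is correct and follows essentially the same route as the paper: apply Theorem~\ref{thm:main} with $n=1$ and $P=A\cdot y_{1,0}$, read off $m=\binom{r}{1}=r$, and then tidy the height term via $\height(4)+\height(2)+\height(2r)\leq 4\height(2)+\height(r)$. The only cosmetic difference is that the paper writes the intermediate step as $\height(4)+\height(1{+}1)+\height(r{+}r{-}1)\leq 3\height(2)+\height(2r)\leq 4\height(2)+\height(r)$, which amounts to the same rearrangement you propose.
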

\begin{proof}
  Apply Theorem~\ref{thm:main} with $n=1$ and $P=A\cdot y_{1,0}$.  Note that
  $\Ord(P)<r-1$, $\Deg(P)=D_1=1$, $\deg(P)=\deg(A)$, and
  $\height(P)=\height(A)$.  In the expression for the height, we used
  $\height(4)+\height(1+1)+\height(r+r-1)\leq3\height(2)+\height(2r)\leq
  4\height(2)+\height(r)$.
\end{proof}

\begin{corollary} \textbf{(Wronskian)}\label{corr:wr}
  Let $L_1,\dots,L_r\in R[x][\partial]$ be
  operators of order~$r$, degree~$d$ and height~$h$. Let
  $f_1,\dots,f_r\in\mathcal{F}$ be solutions of~$L_1,\dots,L_r$, respectively, 
  and consider
  \[
    w := \begin{vmatrix}
      f_1 & f_2 & \cdots & f_r \\
      \partial\cdot f_1 & \partial\cdot f_2 & \cdots & \partial\cdot f_r \\
      \vdots & \vdots & \ddots & \vdots \\
      \partial^{r-1}\cdot f_1 & \partial^{r-1}\cdot f_2 & \cdots & \partial^{r-1}\cdot f_r
      \end{vmatrix}.
  \]
  Then there exists an operator $M\in R[x][\partial]$ with $M\cdot w=0$ and
  \begin{alignat*}1
    \ord(M)&\leq r^r=:m,\qquad
     \deg(M)\leq m^2 r^2 d,\\
    \height(M)&\leq \height(m!) + m\height(1) 
      + (m-1)\height(mr^2d)\\
      &\kern-2em{} + m^2 r\bigl((r+1)(\height(4) + \height(r)) + \height(d) + c^{(m)}(d, h)\bigr).
  \end{alignat*}
\end{corollary}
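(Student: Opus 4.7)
The plan is to realize the Wronskian as the homogeneous polynomial
\[
  P := \sum_{\sigma\in S_r}\sgn(\sigma)\prod_{j=1}^r y_{j,\sigma(j)-1}\in\R,
\]
which evaluates to $w$ when the $y_{j,k}$ are instantiated with $\partial^k\cdot f_j$, and then to feed $P$ into Theorem~\ref{thm:main}. Every monomial of $P$ uses exactly one variable from each group $\{y_{i,0},\ldots,y_{i,r-1}\}$, so $\Deg(P)=(1,\ldots,1)$ and $\Ord(P)=(r-1,\ldots,r-1)<(r,\ldots,r)=(r_1,\ldots,r_n)$. Since $P$ contains no~$x$ and its nonzero coefficients are $\pm1$, we also have $\deg(P)=0$ and $\height(P)\leq\height(1)$. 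This puts us squarely in the hypotheses of Theorem~\ref{thm:main} with $n=r$, $r_i=r$, $D_i=1$, $\deg(L_i)=d$ and $\height(L_i)=h$, and the operator produced there will serve as the required~$M$.

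Direct substitution into the order formula of Theorem~\ref{thm:main} gives $\ord(M)\leq\prod_{i=1}^r\tbinom{r}{1}=r^r=m$, and the degree formula specializes to $m\cdot 0+m^2\sum_{i=1}^r d=m^2rd$, which is dominated by the stated $m^2r^2d$. For the height I substitute term by term. The first contribution $\height(m!)$ is kept verbatim. The constant-in-$k$ contribution $m\,c^{(m)}(\deg(P),\height(P))$ collapses to $m\,\height(1)$, because the action of $\partial$ on an $x$-free polynomial with $\pm1$ coefficients cannot enlarge its coefficients. The middle contribution $(m-1)\height(\deg(P)+m\sum D_i\deg(L_i))$ reads $(m-1)\height(mrd)$, which is bounded above by $(m-1)\height(mr^2d)$. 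The final sum has $r$ identical summands, producing the prefactor $m^2r$.

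The one piece of actual bookkeeping is to repackage the per-index bracket that emerges, namely $\height(4)+\height(2)+\height(r+m)+\height(d)+c^{(m)}(d,h)$, into the compact form $(r+1)(\height(4)+\height(r))+\height(d)+c^{(m)}(d,h)$ displayed in the claim. Using $m=r^r$ together with subadditivity of $\height$ on~$R$ to write $\height(r+m)\leq\height(2)+\height(m)\leq\height(2)+r\height(r)$, the small constants $\height(4),\height(2)$ and the dominant term $r\height(r)$ all fit comfortably inside $(r+1)(\height(4)+\height(r))$ for $r\geq 2$; the trivial case $r=1$ can be checked directly. This absorption is the main (and only mildly subtle) step; everything else is a routine plug-in into Theorem~\ref{thm:main}.
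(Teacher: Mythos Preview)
Your proof is correct and follows exactly the paper's approach: apply Theorem~\ref{thm:main} with $n=r$ and $P$ the Wronskian determinant polynomial, noting $\Ord(P)<(r,\dots,r)$, $\Deg(P)=(1,\dots,1)$, $\deg(P)=0$, $\height(P)=\height(1)$. The paper's own proof is a single sentence doing just this; you have supplied the substitution and the constant-absorption bookkeeping that the paper leaves implicit.
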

\begin{proof}
  Apply Theorem~\ref{thm:main} with $n=r$, $P\in\R$ the polynomial obtained by
  replacing $f_i$ by $y_{i,0}$ in the expression given for~$w$. Note that
  $\Ord(P)<(r,\dots,r)$, $\Deg(P)=(1,\dots,1)$, $\deg(P)=0$, and
  $\height(P)=\height(1)$.
\end{proof}

\begin{experiment}
To check the bounds of Theorem~\ref{thm:main} for plausibility, we have computed
the symmetric product $L=L_1\otimes L_2$ for two random operators
$L_1,L_2\in\set Z[x][\partial]$ of order and degree and height bounded by~$s$,
for $s=2,3,4,5$. It turned out that the order of $L$ meets the bound stated in
the theorem. The bounds for degree and height are not as tight, but the data
suggests that they are only off by some constant factor. The results are given
in the table below.
\begin{center}
  \begin{tabular}{c|c|c|c|c}
    $s$ & 2 & 3 & 4 & 5 \\\hline
    degree bound & 64 & 486 & 2048 & 6250 \\
    actual degree & 16 & 90 & 320 & 850 \\ \hline
    height bound & 471.5 & 3495. & 14677. & 44980.2 \\ 
    actual height & 23.29 & 185.12 & 865.95 & 2693.30 
  \end{tabular}
\end{center}
\end{experiment}

\subsection{Order-Degree Curve}

Finally, the following result provides an order-degree curve for operators which
annihilates a function that is given as a polynomial of $f_1,\dots,f_n$ and
their derivatives/shifts. Once more, the technical difference in the argument
is that coefficient comparison is done with respect to the variables~$y_{i,j}$
as well as~$x$, giving a linear system over $R$ rather than over~$R[x]$. 

\begin{theorem}\label{thm:poly:curve}
  Let $L_1,\dots,L_n\in R[x][\partial]$, $r_i=\ord(L_i)$, $d_i=\deg(L_i)$.
  Let $\mathfrak{a}\subseteq\R$ be as in Lemma~\ref{lemma:4}. 
  Let $P\in\R$ be a homogeneous polynomial with 
  $\Ord(P)<(r_1,\dots,r_n)$ and $\Deg(P)=(D_1,\dots,D_n)$.
  Let
  \[
    r\geq m:=\prod_{i=1}^n\tbinom{D_i+r_i-1}{D_i}
    \text{ and }
    d\geq \frac{r\,m\sum\limits_{i=1}^n D_id_i + m\deg(P)}{r + 1 - m}.
  \]
  Then there exists an operator $L\in R[x][\partial]\setminus\{0\}$ 
  and a polynomial $p\in R[x]\setminus\{0\}$
  with $pL\cdot P\in\mathfrak{a}$ 
  and $\ord(L)\leq r$ and $\deg(L)\leq d$. 
\end{theorem}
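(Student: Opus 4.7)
\medskip
\noindent\emph{Proof plan.}
The plan is to follow the template of the proof of Theorem~\ref{thm:main}, modified exactly as Theorem~\ref{thm:curve1} modifies Theorem~\ref{thm:lclm}: I restrict the ansatz for $L=\sum_k\ell_k\partial^k$ to coefficients $\ell_k\in R[x]$ of degree at most~$d$ up front, and compare coefficients over~$R$ rather than over~$R[x]$, so that the surplus in the order budget ($r\geq m$) translates into a lower permissible degree.

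Concretely, I set $p_i=\lc(L_i)$, $p=\prod_i(p_i^{D_i})^{[r]}$, and for each $k=0,\dots,r$ define
\[
  P_k=\Bigl(\prod_i(\sigma^k(p_i)^{D_i})^{[r-k]}\Bigr)(\partial^k\cdot P),
\]
so that $p(\partial^k\cdot P)=\bigl(\prod_i(p_i^{D_i})^{[k]}\bigr)P_k$. By Lemma~\ref{lemma:3}.\ref{lemma:3.2} we have $\Ord(\partial^k\cdot P)<(r_1+k,\dots,r_n+k)$, hence $\Ord(P_k)<(r_1+k,\dots,r_n+k)$ as well, and Lemma~\ref{lemma:4} (applied with its ``$m$''-parameter equal to~$k$) yields a homogeneous $V_k\in\R$ with $p(\partial^k\cdot P)\equiv V_k\bmod\mathfrak{a}$, $\Ord(V_k)<(r_1,\dots,r_n)$, $\Deg(V_k)\leq(D_1,\dots,D_n)$, and
\[
  \deg(V_k)\leq\deg(P_k)+k\sum_iD_id_i\leq\deg(P)+r\sum_iD_id_i=:D.
\]
The crucial feature is that the bound $D$ does not depend on~$k$.

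Next I take $\ell_k=\sum_{j=0}^d\ell_{k,j}x^j$ with undetermined $\ell_{k,j}\in R$ (giving $(r+1)(d+1)$ unknowns) and impose $\sum_{k=0}^r\ell_k V_k=0$ in $\R$; any nontrivial solution furnishes $L=\sum_k\ell_k\partial^k$ with $pL\cdot P\in\mathfrak{a}$, $\ord(L)\leq r$ and $\deg(L)\leq d$. Equating the coefficient of every monomial $x^j\prod_{i,l}y_{i,l}^{e_{i,l}}$ to zero produces a linear system over~$R$ with at most $m(d+D+1)$ equations, since at most $m=\prod_i\binom{D_i+r_i-1}{D_i}$ homogeneous $y$-monomials of multi-degree $(D_1,\dots,D_n)$ in the variables $y_{i,l}$ with $l<r_i$ can occur, each attached to a polynomial in~$x$ of degree $\leq d+D$.

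A nontrivial solution exists as soon as $(r+1)(d+1)>m(d+D+1)$; using $r+1-m\geq 1$, this rearranges to $(r+1-m)(d+1)>mD$, which is implied by the stated hypothesis $d\geq\frac{mD}{r+1-m}=\frac{rm\sum_iD_id_i+m\deg(P)}{r+1-m}$. The only real care needed is in engineering the split of $p$ so that Lemma~\ref{lemma:4} can be invoked with parameter $k$ while still producing a $k$-independent bound on $\deg(V_k)$; everything else is a variables-versus-equations count of exactly the flavour of the proof of Theorem~\ref{thm:curve1}.
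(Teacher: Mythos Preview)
Your proof is correct and follows essentially the same route as the paper: you construct the $V_k$ exactly as in the proof of Theorem~\ref{thm:main} with $r$ in place of~$m$, obtain the uniform bound $\deg(V_k)\leq\deg(P)+r\sum_iD_id_i$, and then compare coefficients over~$R$ to get a system with $(r{+}1)(d{+}1)$ unknowns and $m(d{+}1{+}D)$ equations. The paper's proof is terser (it just points back to Theorem~\ref{thm:main}), but the content---including the final variable count and the inequality---is identical.
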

\begin{proof}
  For $k=0,\dots,r$, let $V_k$ be as in the proof of Theorem~\ref{thm:main} but with $r$ in place of~$m$
  so that $\Ord(V_k)<(r_1,\dots,r_n)$ and $\deg(V_k)\leq\deg(P)+r\sum_{i=1}^n D_id_i$.   
  Make an ansatz $L = \sum_{i=0}^r\sum_{j=0}^d \ell_{i,j}x^j \partial^i$ for an operator of order~$r$
  and degree~$d$. We wish to determine the $\ell_{i,j}$ such that 
  \[
    \sum_{i=0}^r\sum_{j=0}^d \ell_{i,j} x^j V_i = 0.
  \]
  Coefficient comparison gives a linear system over $R$ with $(r+1)(d+1)$ variables and
  \[
    \max_{k=1}^n(d+1+\deg(V_k))m 
   =m\biggl(d+1+\deg(P)+r\sum_{i=1}^n D_id_i\biggr)
  \]
  equations. For $r$ and $d$ as in the theorem, there are more variables than equations,
  and therefore a nontrivial solution.
\end{proof}

\begin{experiment}
  From the algebra $\set Z[x][\partial]$ with $\sigma(x)=x+1$ and $\delta=0$ we
  picked three random operators $L_1,L_2,L_3$ of order, degree, and height~3,
  and we computed operators $L$ annihilating the Wronskian $w$ associated to
  these operators (cf.~Cor.~\ref{corr:wr} above). In the following figure we
  compare the degree bound obtained by last year's result~\cite{jaroschek13a}
  from the minimal order operator $L$ (dotted) to the a-priori degree bound of
  Theorem~\ref{thm:poly:curve} (solid). That the new bound overshoots is the
  price we have to pay for the feature that this bound can be calculated without
  knowing~$L$.

  \begin{center}
    \includegraphics[width=.8\hsize]{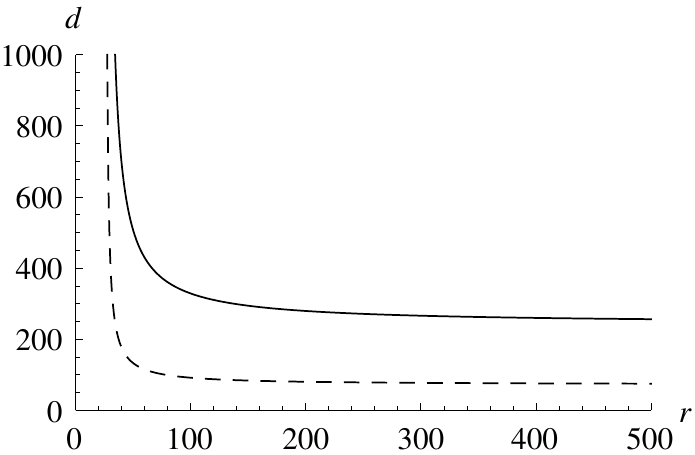}
  \end{center}
\end{experiment}




\bigskip
\noindent\textbf{Acknowledgement.} 
I would like to thank Ruyong Feng for asking a question I try to answer by this paper,
and the referees for pointing out some subtle mistakes in the first version. 

\bibliographystyle{plain}
\bibliography{all}

\begin{thebibliography}{10}

\bibitem{bostan10b}
Alin Bostan, Shaoshi Chen, Fr{\'e}d{\'e}ric Chyzak, and Ziming Li.
\newblock Complexity of creative telescoping for bivariate rational functions.
\newblock In {\em Proceedings of ISSAC'10}, pages 203--210, 2010.

\bibitem{bostan12b}
Alin Bostan, Frederic Chyzak, Ziming Li, and Bruno Salvy.
\newblock Fast computation of common left multiples of linear ordinary
  differential operators.
\newblock In {\em Proceedings of ISSAC'12}, pages 99--106, 2012.

\bibitem{bronstein96}
Manuel Bronstein and Marko Petkov{\v s}ek.
\newblock An introduction to pseudo-linear algebra.
\newblock {\em Theoretical Computer Science}, 157(1):3--33, 1996.

\bibitem{chen12c}
Shaoshi Chen and Manuel Kauers.
\newblock Order-degree curves for hypergeometric creative telescoping.
\newblock In {\em Proceedings of ISSAC'12}, pages 122--129, 2012.

\bibitem{chen12b}
Shaoshi Chen and Manuel Kauers.
\newblock Trading order for degree in creative telescoping.
\newblock {\em Journal of Symbolic Computation}, 47(8):968--995, 2012.

\bibitem{jaroschek13a}
Maximilian Jaroschek, Manuel Kauers, Shaoshi Chen, and Michael~F. Singer.
\newblock Desingularization explains order-degree curves for {Ore} operators.
\newblock In Manuel Kauers, editor, {\em Proceedings of ISSAC'13}, pages
  157--164, 2013.

\bibitem{kauers10j}
Manuel Kauers and Peter Paule.
\newblock {\em The Concrete Tetrahedron}.
\newblock Springer, 2011.

\bibitem{kauers14d}
Manuel Kauers and Lily Yen.
\newblock On the length of integers in telescopers for proper hypergeometric
  terms.
\newblock {\em Journal of Symbolic Computation}, 2014.
\newblock to appear.

\bibitem{koutschan10c}
Christoph Koutschan.
\newblock {HolonomicFunctions (User's Guide)}.
\newblock Technical Report 10-01, RISC Report Series, University of Linz,
  Austria, January 2010.

\bibitem{mallinger96}
Christian Mallinger.
\newblock Algorithmic manipulations and transformations of univariate holonomic
  functions and sequences.
\newblock Master's thesis, J. Kepler University, Linz, August 1996.

\bibitem{mohammed05}
Mohamud Mohammed and Doron Zeilberger.
\newblock Sharp upper bounds for the orders of the recurrences outputted by the
  {Z}eilberger and q-{Z}eilberger algorithms.
\newblock {\em Journal of Symbolic Computation}, 39(2):201--207, 2005.

\bibitem{salvy94}
Bruno Salvy and Paul Zimmermann.
\newblock {G}fun: a {M}aple package for the manipulation of generating and
  holonomic functions in one variable.
\newblock {\em ACM Transactions on Mathematical Software}, 20(2):163--177,
  1994.

\bibitem{stanley80}
Richard~P. Stanley.
\newblock Differentiably finite power series.
\newblock {\em European Journal of Combinatorics}, 1:175--188, 1980.

\bibitem{stanley99}
Richard~P. Stanley.
\newblock {\em Enumerative Combinatorics, Volume 2}.
\newblock Cambridge Studies in Advanced Mathematics 62. Cambridge University
  Press, 1999.

\bibitem{yen96}
Lily Yen.
\newblock A two-line algorithm for proving terminating hypergeometric
  identities.
\newblock {\em Journal of Mathematical Analysis and Applications},
  198(3):856--878, 1996.

\end{thebibliography}

\end{document}